\newcommand{\BIT}{\begin{itemize}}
\newcommand{\EIT}{\end{itemize}}
\newcommand{\BNUM}{\begin{enumerate}}
\newcommand{\ENUM}{\end{enumerate}}
\def\reals{\mathbb{R}} 
\def\complex{\mathbb{C}} 
\renewcommand{\exp}[1]{\operatorname{exp}\left(#1\right)} 
\def\E{\mathbb{E}} 
\def\Gsn{\mathcal{N}}
\def\Unif{\textnormal{Unif}}
\newtheorem{theorem}{Theorem}
\newtheorem{lemma}{Lemma}
\newtheorem{assumption}{Assumption}
\newcommand{\putFig}[3]{
        \begin{figure}[h!] 
 		\centering
 		\includegraphics[width=#3]{figs/#1}
		  \caption{#2}
                \label{fig:#1}
        \end{figure} }
\newcommand{\sset}{\mathcal{S}}
\newcommand{\mv}{\Delta\mathbf{v}}
\newcommand{\rev}[1]{#1}
\begin{document}
%
\title{Quick Line Outage Identification in Urban Distribution Grids via Smart Meters}


\author{Yizheng~Liao,
        Yang~Weng,~\IEEEmembership{Member,~IEEE,}
        Chin-Woo~Tan,
        Ram~Rajagopal,~\IEEEmembership{Member,~IEEE}
\thanks{Y. Liao, C-W.Tan, R. Rajagopal are with Department of Civil and Environmental
Engineering, Stanford University, Stanford, CA, 94305 USA e-mail: (\{yzliao, tancw,
ramr\}@stanford.edu). Y. Weng is with School of Electrical, Computing, and
Energy Engineering, Arizona State University, Tempe, AZ, 85287 USA e-mail:
yang.weng@asu.edu.}}


\maketitle

\begin{abstract}

%
The growing integration of distributed energy resources (DERs) in distribution grids raises various reliability issues due to DER's uncertain and complex behaviors. With a large-scale DER penetration in distribution grids, traditional outage detection methods, which rely on customers report and smart meters' ``last gasp'' signals, will have poor performance, because the renewable generators and storages and the mesh structure in urban distribution grids can continue supplying power after line outages. To address these challenges, we propose a data-driven outage monitoring approach based on the stochastic time series analysis with theoretical guarantee. Specifically, we prove via power flow analysis that the  dependency of time-series voltage measurements exhibits significant statistical changes after line outages. This makes the theory on optimal change-point detection suitable to identify line outages. However, existing change point detection methods require post-outage voltage distribution, which are unknown in distribution systems. Therefore, we design a maximum likelihood estimator to directly learn the distribution parameters from voltage data. We prove that the estimated parameters-based detection also achieves the optimal performance, making it extremely useful for fast distribution grid outage identifications. Furthermore, since smart meters have been widely installed in distribution grids and advanced infrastructure (e.g., PMU) has not widely been available, our approach only requires voltage magnitude for quick outage identification. Simulation results show highly accurate outage identification in eight distribution grids with 14 configurations with and without DERs using smart meter data.
\end{abstract}


%
\IEEEpeerreviewmaketitle

\section{Introduction}
The ongoing large-scale integration of distributed energy resources (DERs) makes photovoltaic (PV) power devices (renewable generation), energy storage devices, and electric vehicles ubiquitous. Such a change transitions the urban power grid into sustainable network and reduces the electricity cost and transmission loss \cite{sce2015application}. However, such a change also raises fundamental challenges in system operations. For example, the reverse power flow from residential houses renders the existing protective architecture inadequate. Also, frequent plug-and-charge electric vehicles will degrade power quality, causing transformer overload and voltage flickers \cite{clement2010impact}. Because of these changes on distribution grid, even a small-scale DER integration could destabilize the local grid and cause reliability issues for customers \cite{dey2010urban}. \cite{greentech} shows that the distribution power outages or blackouts caused by newly added uncertainties can cause a loss of thousands to millions of dollars within one-hour, calling for newly designed fault diagnosis approach for distribution grid operation. 

The traditional power outage analysis in distribution grids relies on passive feedback from customer reporting. Collected into Customer Information System (CIS), such information is processed in the Outage Management System (OMS) for sending field crews to identify and repair the outage. Due to the human-in-the-loop system design, delay and imprecise outage information causes inefficient detection and slow restoration. Therefore, smart meters with advanced metering infrastructure (AMI) capability were installed recently to send a ``last gasp'' message when there is a loss of power \cite{luan201417}. \cite{doe2014fault} shows additional fault location, isolation, and service restoration (FLISR) technologies to reduce some negative impact and the interruption duration.

However, the performance of the traditional methods and the recent approaches above will be degraded as the growth of DER penetration in distribution grids. For example, as shown in Fig.~\ref{fig:outage_flow}, when there is no power flow in the distribution circuit connecting to customers, the customer can still receive power from the rooftop solar panels, battery storages, and EVs. So the smart meter at the customer premises cannot report a power outage. Also, the secondary distribution grids are mesh networks in metropolitan areas \cite{rudin2012machine}, making a line outage, which may be caused by faults (e.g, short-circuit or open-circuit) and human activities, unnecessarily cause a power outage. Furthermore, some advanced secondary distribution grids have the ``self-healing'' capability, where the switches are automatically open or closed to isolate outages, restore power supply, and minimize customer impacts. However, it is still important to detect, localize, and identify the out-of-service branches for the situation awareness of distribution system operators.

\putFig{outage_flow}{An example of distribution grid outage. The red dashed line is the out-of-service branch.}{0.9\linewidth}

The power line outage identification in transmission grids has received a surge of interest in the past decade, where DC power flow approximation and phasor measurement units (PMUs) are the most common approaches. For example, phase changes across all buses are compared with potential fault events in \cite{tate2008line}. In \cite{he2010fault}, a transmission grid is formulated as a graphical model and phase angles are used to track the grid topology change. A regularized linear regression is employed to detect power outages in \cite{zhu2012sparse}. The approach in \cite{wei2012change} compares the branch admittance before and after outages. These methods, however, cannot be directly used in the distribution grid because $1)$ the DC approximation has poor performance in distribution grids as many systems have non-negligible line loss; $2)$ installing PMUs at all buses in distribution grid is expensive and impractical; and $3)$ the topology information is unavailable or inaccurate in distribution grids, because many DERs do not belong to the utilities and their connectivities are unknown to the system operators \cite{liao2019unbalanced}.  

For resolving the issues above, we model voltage measurement at each bus as a random variable, so that the distribution grid is modeled as a multi-variate probability distribution. We show that a line outage will lead to a change of the statistical dependence between buses' voltage data, and consequently, a change of the joint distribution. Hence, the outage can be discovered by detecting the change of the multivariate probability distribution. A well-known method to sequentially detect the probability distribution change is change point detection method, whose objective is detecting an outage as quickly as possible with a constraint of false detection rate \cite{tartakovsky2005general, liao2016urban, liao2019structural}. 

The change point detection methods have been applied to detect outage in transmission grids \cite{chen2016quickest,wei2012change,rovatsos2017statistical}. However, they cannot be directly applied because of the practical properties of distribution grids. Firstly, the outage patterns in distribution grids are usually unpredictable. With the growth of grid size, the possible post-outage distributions increase exponentially. To overcome this drawback, we propose a maximum likelihood method to directly learn the unknown post-outage probability distribution parameters from voltage data. Secondly, PMUs are not widely installed in distribution grids. Therefore, unlike the approaches in transmission grids, we cannot use the voltage phase to identify outages. We prove that voltage magnitude data, which are collected from smart meters periodically, are sufficient to detect line outages. Thirdly, the distribution grids usually have outdated or inaccurate topology \cite{liao2018urban}. Thus, precisely finding the out-of-service branch is challenging. We prove that the voltages of two disconnected buses are conditionally independent, which is subsequently used to find the line outage without knowing the post-outage probability distribution. 


The performance of our data-driven outage detection and localization algorithm is verified by simulations on the standard IEEE $8$- and $123$-bus distribution test cases \cite{kersting2001radial} and $6$ European distribution grids \cite{pretticodistribution} with $14$ network configurations. Three different real smart meter data sets are utilized for generating voltage data via data interpolation, different outage scenarios, and sensitivity analysis: Pacific Gas and Electric Company (PG\&E) data set that contains $110,000$ residential households in North California, ADRES project data set \cite{Einfalt11, VUT16} that contains 30 houses load profiles in Upper-Austria, and Pecan Street data set, which has net load data of 345 houses with root-top PV panels in Austin, Taxes.

The main contributions of this paper are summarized below:
\begin{itemize}
	\item A novel data-driven distribution grid line outage detection method is proposed. For a given probability of false alarm, the proposed outage detector is proved to have optimal detection delay.
	\item Unlike many existing works that need to know outage pattern in advance, we prove that our detection algorithm can learn the post-outage statistics directly from data. Hence, the implementation of our outage detector does not require the prior knowledge of outage pattern. Our numerical simulation demonstrates that utilizing the estimated statistics based on post-outage data does not degrade the detection performance.
	\item PMUs have not been widely installed in distribution grids. By utilizing the small angle property of distribution grids, we prove that the proposed method  only needs to use voltage magnitudes, which are usually available via smart meters, to detect line outages.
	\item We also propose an outage localization algorithm that finds the out-of-service branch after an outage event is detected. A highlight is that the proposed outage localization algorithm does not need to the distribution grid topology, which is usually required in many existing works.
	\item We validate the outage identification algorithm using three real world data sets and eight distribution grids with 17 network configurations. The numerical results illustrate that the optimality of the proposed algorithm. Additionally, multiple sensitivity analyses are conducted to show the applicability of this new line outage detection method in real world distribution grid operation.
\end{itemize}

The rest of the paper is organized as follows: Section~\ref{sec:model} introduces the modeling and the problem of the data-driven power outage detection and localization based on voltage data. Section~\ref{sec:outage_detect} uses a proof to justify that the outage can be detected by change point detection method. Also, we propose the outage detection method for only using voltage magnitudes. Section~\ref{sec:outage_identify} presents the outage localization method. A detailed algorithm for outage detection and localization is illustrated as well. Section~\ref{sec:num} evaluates the performance of the new method and Section~\ref{sec:con} concludes the paper. 

\section{System Model and Problem Formulation}
\label{sec:model}
In order to formulate the power outage detection problem, we need to describe the distribution grid and its voltage data. A distribution grid is defined as a physical network with buses and branches that connect buses. For a distribution grid with $M$ buses, we use $\mathcal{S}=\{1,2,\dots,M\}$ to represent the set of all bus indices. To utilize the time series voltage data, the voltage measurement at bus $i$ is modeled as a random variable $V_i$. We use $\mathbf{V}_\mathcal{S} = [V_1,V_2,\dots,V_M]^T$ to denote all voltage random variables in the network, where $T$ denotes the transpose operator. At the discrete time $n$, the noiseless voltage measurement at bus $i$ is $v_i[n] = |v_i[n]|\exp{j\theta_i[n]} \in \complex$, where $|v_i[n]| \in \reals$ denotes the voltage magnitude in per unit and $\theta_i[n] \in \reals$ denotes the voltage phase angle in degrees. All voltages are sinusoidal signals at the same frequency. We use $\mathbf{v}[n] = [v_1[n],v_2[n],\dots,v_M[n]]^T$ to denote a collection of all voltage measurements in a network at time $n$. Thus, $\mathbf{v}[n]$ is the realization of $\mathbf{V}_\sset$ at time $n$. Also, we use $\mathbf{v}^{1:N} = (\mathbf{v}[1],\mathbf{v}[2],\dots,\mathbf{v}[N])$ to denote a collection of all voltage  measurements in the network up to time $N$.



The problem to detect and localize line outages in a distribution grid is defined as follows:
\begin{itemize}
\item Problem: data-driven power outage detection and localization based on voltage measurements
\item Given: a sequence of the historical voltage measurements $\mathbf{v}^{1:N}$ up to the current time $N$
\item Find: (1) the outage time and (2) the branches that are out-of-service
\end{itemize}

\section{Optimal Distribution Grid Line Outage Detection}
\label{sec:outage_detect}
Voltage measurements usually have an irregular distribution and are hard to be used for our goal of this paper. Therefore, instead of using voltage measurements directly, we use the incremental change of the voltage measurements to detect outages, which is defined as $\mv[n] = \mathbf{v}[n] - \mathbf{v}[n-1]$. Accordingly, $\Delta\mathbf{v}^{1:N} = (\mv[1],\mv[2],\cdots,\mv[N])$. We use $\Delta V_i$ to represent the voltage change random variable at bus $i$ and $\Delta\mathbf{V}_\mathcal{S}$ to represent the voltage change random variables of the entire system. In the following, we will prove that, the probability distribution of $\Delta\mathbf{V}_\mathcal{S}$ will be different after an outage. In the following context, the operator $\backslash$ denotes the complement operator, i.e. $\mathcal{A}\backslash\mathcal{B} = \{i \in \mathcal{A}, i\notin \mathcal{B}\}$.

\begin{assumption}
\label{ass:indept}
	In distribution grids, 
	\begin{itemize}
		\item the incremental change of the current injection $\Delta I$ at each non-slack bus is independent, i.e., $\Delta I_i \perp \Delta I_k$ for all $i \neq k$,
		\item the incremental changes of the current injection $\Delta I$ and bus voltage $\Delta V$ follow Gaussian distribution with zero means and non-zero variances. 
	\end{itemize}
\end{assumption}

The Assumption \ref{ass:indept} has been adopted in many works, such as \cite{deka2015structure, bolognani2013identification, liao2018urban}. In \cite{liao2018urban}, the authors use real-data to validate both assumptions. According to Assumption~\ref{ass:indept}, $\Delta\mathbf{V}_\mathcal{S}$ follows a multivariate Gaussian distribution. With Assumption \ref{ass:indept}, we prove that the pairwise bus voltages are conditionally independent if there is no branch between them.

\begin{theorem}
\label{thm:cond_indept}
If the change of current injection at each bus is approximately independent and no branch connects bus $i$ and bus $k$, the voltage changes at bus $i$ and bus $k$ are conditionally independent, given the voltage changes of all other buses, i.e. $\Delta V_i \perp  \Delta V_k | \{\Delta V_e,e \in \sset \backslash\{i,k\}\}$.
\end{theorem}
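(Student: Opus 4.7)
The plan is to use the Gaussian Markov random field characterization of conditional independence: for a jointly Gaussian random vector, $\Delta V_i \perp \Delta V_k \mid \{\Delta V_e : e \in \sset\setminus\{i,k\}\}$ if and only if the $(i,k)$ entry of the precision (inverse covariance) matrix vanishes. So the target reduces to showing that this one entry is zero under the no-branch hypothesis.

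First I would linearize the AC network equations around the operating point. Kirchhoff's and Ohm's laws on the non-slack buses give $\Delta \mathbf{I}_{\sset} = \mathbf{Y}\,\Delta\mathbf{V}_{\sset}$, where $\mathbf{Y}$ is the (reduced) bus admittance matrix, symmetric and with sparsity pattern matching the physical branches. By Assumption~\ref{ass:indept}, $\Delta \mathbf{I}_{\sset}$ is a zero-mean Gaussian vector with diagonal covariance $\Sigma_I$. Since $\mathbf{Y}$ is nonsingular on the non-slack buses, a change of variables shows $\Delta\mathbf{V}_{\sset}$ is jointly Gaussian with density proportional to $\exp(-\tfrac{1}{2}\Delta\mathbf{v}^{T}\mathbf{Y}^{T}\Sigma_I^{-1}\mathbf{Y}\,\Delta\mathbf{v})$, so the precision matrix is $\mathbf{K} = \mathbf{Y}^{T}\Sigma_I^{-1}\mathbf{Y}$.

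Next I would read off $K_{ik}$ directly from the admittance sparsity. Because $\Sigma_I$ is diagonal and $\mathbf{Y}$ is symmetric,
\[
  K_{ik} \;=\; \sum_{m\in\sset} \frac{Y_{mi}\,Y_{mk}}{\sigma_m^{2}},
\]
and each $Y_{mj}$ is nonzero only when $m=j$ or branch $(m,j)$ exists. When no branch connects $i$ and $k$, the contributions with $m=i$ and $m=k$ drop out because they carry a factor $Y_{ik}=0$. Combining this with the general Gaussian fact above yields the claimed pairwise conditional independence.

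The main obstacle is the residual contributions from buses $m$ that are physically adjacent to both $i$ and $k$: these are not immediately forced to vanish by the no-branch hypothesis alone. I expect this is where the proof must invoke additional distribution-grid structure, such as the radial topology emphasized throughout the paper and the consistent sign pattern of branch admittances, to show that any such remaining terms collapse, yielding $K_{ik}=0$. Formalizing that cancellation, rather than the Gaussian algebra above, is where I expect the real work of the proof to sit.
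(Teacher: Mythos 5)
Your route is genuinely different from the paper's. The paper never forms a precision matrix: it conditions directly on $\{\Delta V_e, e\in\sset\backslash\{i,k\}\}$, rewrites the nodal equation as $\Delta V_i = \tfrac{1}{Y_{ii}}\bigl(\Delta I_i + \sum_{e\in\mathcal{N}(i)}\Delta v_e Y_{ie}\bigr)$ (and likewise for bus $k$, using $Y_{ik}=0$ so that $\Delta V_k$ does not enter bus $i$'s equation), and then concludes from the unconditional independence $\Delta I_i \perp \Delta I_k$ that the conditional law factors. Your Gaussian--Markov-random-field reduction to showing $K_{ik}=0$ for $K=\mathbf{Y}^{T}\Sigma_I^{-1}\mathbf{Y}$ is the standard, more rigorous way to pose the question, and your algebra up to $K_{ik}=\sum_{m}Y_{mi}Y_{mk}/\sigma_m^2$ is correct.

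However, the obstacle you flag at the end is not something extra structure will dissolve; it is a genuine obstruction, so your proof cannot be completed as stated. For every $m\in\mathcal{N}(i)\cap\mathcal{N}(k)$ the term $Y_{mi}Y_{mk}/\sigma_m^2$ survives, and with the usual sign convention (off-diagonal admittance entries are negatives of branch admittances) these products share a common sign, so they do not cancel; radiality does not help either, since two children of a common parent are non-adjacent yet share a neighbor. Hence $K_{ik}\neq 0$ whenever $i$ and $k$ are at graph distance two, and the literal claim holds only for bus pairs with no common neighbor. What your computation actually exposes is the hidden gap in the paper's own argument: the conditioning variables $\{\Delta V_e\}_{e\neq i,k}$ are linear functions of the \emph{entire} current vector (through $\Delta\mathbf{V}=\mathbf{Y}^{-1}\Delta\mathbf{I}$), and conditioning independent Gaussians on linear combinations that involve both $\Delta I_i$ and $\Delta I_k$ generically induces correlation between them --- just as $X\perp Y$ fails given $X+Y+Z$ for i.i.d.\ normals. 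So the paper's final step (unconditional independence of injections implies conditional factorization of voltages) does not follow either. A defensible version of the theorem would either restrict to pairs at graph distance at least three, or prove the weaker quantitative statement that $|K_{ik}|$ is dominated by the adjacent-pair entries, which is what the localization heuristic in Section~\ref{sec:outage_identify} in fact relies on.
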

\begin{proof}
For bus $i$, the current and voltage relationship can be expressed as $\Delta I_i = \Delta V_iY_{ii} - \sum_{e \in \mathcal{N}(i)}\Delta V_eY_{ie}$ with $Y_{ii} = \sum_{e \in \mathcal{N}(i)}Y_{ie}$, where $Y_{ie}$ denotes the $ie$th element of the admittance matrix $Y$ and the neighbor set $\mathcal{N}(i)$ contains the indices of the neighbors of bus $i$, i.e., $\mathcal{N}(i) = \{e \in \mathcal{S}|Y_{ie} \neq 0\}$. If bus $i$ and bus $k$ are not connected, $k \notin \mathcal{N}(i)$ and $Y_{ik} = 0$.  Given $\Delta V_e = \Delta v_e$ for all $e \in \sset\backslash \{i,k\}$, the equation above becomes to
\begin{align}
	\Delta I_i &= \Delta V_iY_{ii} - \sum_{e \in \mathcal{N}(i)}\Delta v_eY_{ie}, \nonumber \\
	\Delta V_i &= \frac{1}{Y_{ii}} (\Delta I_i + \sum_{e \in \mathcal{N}(i)}\Delta v_eY_{ie}).\label{eq:VI} 
\end{align}
Similarly, $\Delta V_k = (\Delta I_k + \sum_{e \in \mathcal{N}(k)}\Delta v_eY_{ke})/Y_{kk}$. With the assumption of the current change independence, i.e., $\Delta I_i \perp \Delta I_k$, $\Delta V_i$ and $\Delta V_k$ are conditionally independent given $\Delta \mathbf{V}_{\sset\backslash \{i,k\}}$.
\end{proof}

\putFig{outage_plot}{An example of nodal voltages before and after a line outage. $\lambda$ denotes the outage occurrence time.}{\linewidth}

A branch admittance becomes zero when it is out-of-service. The voltages at the two ends of this branch become conditionally independent. Hence, the probability distribution of $\Delta\mathbf{V}_\mathcal{S}$ is different before and after an outage because some elements of the mean vector and covariance matrix will change. Let $\lambda$ denote the time that an outage occurs. We assume that $\Delta\mathbf{V}_\mathcal{S}$ follow a Gaussian distribution $g$ with the mean $\mu_0$ and the covariance matrix $\Sigma_0$ in the pre-outage status (i.e., $N \leq \lambda$) and a different Gaussian distribution $f$ with the mean $\mu_1$ and the covariance $\Sigma_1$ after any outage (i.e., $N > \lambda$). An example is illustrated in Fig.~\ref{fig:outage_plot}. One way to find the outage time $\lambda$ is performing a sequential hypothesis test at each time $N$ as follows \cite{tartakovsky2005general}:
\begin{align*}
	\mathcal{H}_0 \text{ (pre-outage)}: & \quad \lambda > N, \\
	\mathcal{H}_1 \text{ (post-outage)}: & \quad \lambda \leq N.
\end{align*}
Finding the outage time is known as the change point detection problem. Usually, the line outage occurrence time is unpredictable. Therefore, we assume the power outage time $\lambda$ as a discrete random variable with a probability mass function $\pi(\lambda)$. Now, we can use a Bayesian approach to find $\lambda$. In this paper, we assume $\lambda$ follows a geometric distribution with a parameter $\rho$. The joint distribution of $\lambda$ and $\Delta\mathbf{V}_\mathcal{S}$ can be written as
\[
P(\lambda,\Delta\mathbf{V}_\mathcal{S}) = \pi(\lambda) P(\Delta\mathbf{V}_\mathcal{S}|\lambda).
\]
When $\lambda = k$, all voltage data obtained before time $k$ follow the distribution $g$ and all the data obtained at and after time $k$ follow the distribution $f$. Therefore, the likelihood probability $P(\Delta\mathbf{V}_\mathcal{S}|\lambda)$ above is expressed as follows:
\[
P(\Delta\mathbf{V}_\mathcal{S} = \mv^{1:N}|\lambda = k) = \prod_{n=1}^{k-1}g(\mv[n])\prod_{n=k}^{N}f(\mv[n]),
\]
for $k = 1,2,\cdots,N+1$. When $\lambda = N+1$, it refers to the outage has not occurred and all data follow the distribution $g$.

Finding the outage time $\lambda$ is equivalent to finding the post-outage posterior probability $P(\mathcal{H}_1|\Delta\mathbf{V}_\mathcal{S}) = P(\lambda \leq N|\Delta\mathbf{V}_\mathcal{S} = \mv^{1:N})$ at each time $N$. If the posterior probability is large enough, we can declare an outage in the grid. At each time $N$, 
\begin{align}
	& P(\lambda \leq N|\mv^{1:N}) \nonumber \\
	=& \sum_{k=1}^N\frac{P(\lambda = k, \mv^{1:N})}{P(\mv^{1:N})}, \nonumber \\
	=& \frac{1}{P(\mv^{1:N})}\sum_{k=1}^N \pi(\lambda = k)P(\mv^{1:N}|\lambda = k), \nonumber \\
	=& C\sum_{k=1}^N\pi(k)\prod_{n=1}^{k-1}g(\mv[n])\prod_{n=k}^{N}f(\mv[n]), \label{eq:post}
\end{align}
where $C$ is a normalization factor such that $\sum_{k=1}^{N+1}P(\lambda = k|\mv^{1:N}) = 1$. In the normal operation, $f(\mv[n])$ is small and $P(\lambda \leq N|\mv^{1:N})$ is small. Once an outage occurs at time $\lambda = k \leq N$, all data collected at $n \geq \lambda$ follow $f(\mv[n])$ and $P(\lambda \leq N|\mv^{1:N})$ becomes large. Hence, we can set a threshold and declare an outage when the posterior probability surpasses this threshold. This process is visualized in Fig.~\ref{fig:detect_plot}.
\putFig{detect_plot}{An example of outage detection based on the posterior probability. $\lambda$ is the outage occurrence time. $\tau$ is the outage detection time. The brown dashed line is the detection threshold.}{1\linewidth}

\subsection{Optimal Outage Detection}
In the outage detection problem, we consider two performance metrics: \textit{probability of false alarm} and \textit{average detection delay}. The former metric evaluates how frequent a detector falsely declares an outage in the pre-outage status. If $\tau$ denotes the time of an outage being detected, the probability of false alarm is defined as $P(\tau < \lambda)$. The latter metric describes the average latency that a detector finds the outage after it has occurred. The average detection delay is defined as $E(\tau - \lambda | \tau \geq \lambda)$. For distribution grid line outage detection, we want to find the outage time $\lambda$ as quickly as possible with a constraint of the maximum probability of false alarm $\alpha$, i.e.,
\begin{equation}
\label{eq:obj}
\begin{aligned}
& \underset{\tau}{\text{minimize}}
& & E(\tau - \lambda | \tau \geq \lambda) \\
& \text{subject to}
& & P(\tau < \lambda ) \leq \alpha.
\end{aligned}
\end{equation}
By the Shiryaev-Roberts-Pollaks procedure \cite{pollak2009optimality}, we have the following lemma to solve the optimization problem in (\ref{eq:obj}).

\begin{lemma}
\label{thm:detect_rule}
	Given a maximum probability of false alarm $\alpha$, the following detection rule
	\begin{equation}
	\label{eq:detect_rule}
	\tau = \inf\left\{N \geq 1: P(\lambda \leq N|\mv^{1:N}) \geq 1-\alpha\right\},
\end{equation}
is asymptotically optimal \cite{tartakovsky2005general}.
\end{lemma}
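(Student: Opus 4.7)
The plan is to recognize this as the classical Bayesian quickest change-point detection problem of Shiryaev: conditional on $\lambda = k$, the increments $\mv[n]$ are i.i.d.\ under the Gaussian $g$ for $n<k$ and i.i.d.\ under $f$ for $n \geq k$, with $\lambda$ geometric. First I would rewrite the posterior $p_N := P(\lambda \leq N \mid \mv^{1:N})$ from (\ref{eq:post}) via a one-step Bayesian update driven by the likelihood ratio $f(\mv[N])/g(\mv[N])$, showing that $\{p_N\}$ is an observable Markov chain with a closed-form recursion. This reduces the problem to optimal stopping of a one-dimensional observable Markov process, which is precisely the setting of the Shiryaev--Pollak--Tartakovsky theorem.

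Next I would verify the two halves of optimality. Admissibility is a short calculation: at the stopping time $\tau$ defined by (\ref{eq:detect_rule}) we have $p_\tau \geq 1-\alpha$ by construction, hence
\begin{align*}
P(\tau < \lambda) &= E\bigl[\indic{\lambda > \tau}\bigr] = E\bigl[P(\lambda > \tau \mid \mv^{1:\tau})\bigr] \\
&= E[1-p_\tau] \leq \alpha,
\end{align*}
where the conditioning step uses that $\tau$ is a stopping time for the natural filtration generated by $\mv^{1:N}$. For optimality within the class of admissible rules I would introduce the Lagrangian Bayes risk $R_c(\tau) = c\,E(\tau-\lambda)^+ + P(\tau<\lambda)$ and solve the dynamic program whose state is the scalar posterior $p_N$. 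A standard monotone-cost argument (the ``stop'' cost $1-p_N$ is decreasing in $p_N$ and the expected continuation cost is convex in $p_N$) forces the Bayes-optimal policy to be a threshold rule $\inf\{N : p_N \geq \gamma(c)\}$; tuning $c$ so that $\gamma(c) = 1-\alpha$ recovers (\ref{eq:detect_rule}).

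The main obstacle, and where the qualifier \emph{asymptotic} enters, is the matching lower bound as $\alpha \to 0$. What one really needs is
\[
\inf_{\tau\text{ admissible}} E[\tau - \lambda \mid \tau \geq \lambda] = \frac{|\log \alpha|}{D(f\|g) + |\log(1-\rho)|}\,(1+o(1)),
\]
proved by a change-of-measure argument combined with the strong law for the cumulative log-likelihood ratio under $f$ and renewal-theoretic overshoot control. For our multivariate-Gaussian $f$ and $g$ the Kullback--Leibler divergence $D(f\|g)$ is finite and the log-likelihood increments have finite second moment, so the technical hypotheses of \cite{tartakovsky2005general,pollak2009optimality} are met and the asymptotic lower bound is imported verbatim; the real work in the present paper is justifying the change-of-distribution model itself, which is already done in Theorem~\ref{thm:cond_indept}.
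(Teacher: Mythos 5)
Your proposal is correct in outline, but note that the paper does not actually prove this lemma: it is stated as an imported result, with the entire burden carried by the citations to Tartakovsky--Veeravalli and Pollak. What you have written is a reconstruction of the proof that lives inside those references --- the reduction to Shiryaev's Bayesian quickest-detection problem, the one-line admissibility bound $P(\tau<\lambda)=E[1-p_\tau]\le\alpha$, the optimal-stopping/threshold-structure argument, and the change-of-measure lower bound that supplies the matching asymptotics. That is more than the paper offers, and the one genuinely paper-specific step you identify --- checking that the multivariate-Gaussian $f$ and $g$ give finite $D_{\mathrm{KL}}(f\|g)$ and finite-variance log-likelihood increments so the hypotheses of the cited theorems apply --- is exactly the verification the paper silently omits, so including it is a real improvement. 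Two small cautions. First, your sentence ``tuning $c$ so that $\gamma(c)=1-\alpha$ recovers (\ref{eq:detect_rule})'' overstates what the Lagrangian argument gives: for a fixed finite $\alpha$ there is in general no multiplier $c$ whose optimal threshold is exactly $1-\alpha$ with the constraint active; the correct statement is that the $1-\alpha$ threshold rule is admissible and its delay attains the first-order lower bound as $\alpha\to 0$, which is precisely why the lemma claims only \emph{asymptotic} optimality --- you do land on this in your final paragraph, so it is a matter of phrasing rather than a gap. Second, the monotone-cost step is usually run through concavity of the value function in the posterior $p_N$ (so that $1-p-V(p)$ is convex and its zero set is an upper interval), not convexity of the continuation cost; again harmless in a sketch, but worth stating the standard way. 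Your closing remark is also apt: the substantive contribution of the paper is the modeling step (Theorem~\ref{thm:cond_indept} and the Gaussian increment model), not the detection-theoretic machinery, which both you and the authors ultimately take from \cite{tartakovsky2005general}.
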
 
With Lemma~\ref{thm:detect_rule}, the threshold (brown dashed line) in Fig.~\ref{fig:detect_plot} is $1-\alpha$. Lemma~\ref{thm:opt_delay} shows the asymptotically optimal expected detection delay.
\begin{lemma}
	\label{thm:opt_delay}
	For a given probability of false alarm $\alpha$, the detection rule in (\ref{eq:detect_rule}) achieves the asymptotically optimal detection delay
	\rev{
	\begin{equation}
	\label{eq:bound}
	D(\tau) = E(\tau - \lambda|\tau \geq \lambda) = \frac{|\log(\alpha)|}{-\log(1-\rho)+D_\text{KL}(f\|g)},
	\end{equation}
	}
	as $\alpha \rightarrow 0$, where $D_\text{KL}(f\|g)$ is the Kullback-Leibler distance and $\log$ denotes the natural logarithm\cite{tartakovsky2008asymptotically}.
\end{lemma}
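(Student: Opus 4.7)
The plan is to recognize this as the classical Shiryaev Bayesian change-point result and unwind the optimal-delay formula through a renewal/log-likelihood argument. First I would rewrite the detection rule in a more convenient form: the posterior crossing $P(\lambda\le N\,|\,\Delta\mathbf{v}^{1:N})\ge 1-\alpha$ is equivalent to the Shiryaev statistic
\[
R_N \;=\; \frac{P(\lambda\le N\,|\,\Delta\mathbf{v}^{1:N})}{1-P(\lambda\le N\,|\,\Delta\mathbf{v}^{1:N})}
\]
exceeding a threshold $B(\alpha) = (1-\alpha)/\alpha$. Using the geometric prior on $\lambda$ with parameter $\rho$ and the form of the joint likelihood already derived in equation~(\ref{eq:post}), $R_N$ admits the standard recursion $R_N = (R_{N-1} + \rho/(1-\rho))\,\cdot (1-\rho)^{-1}\cdot f(\Delta\mathbf{v}[N])/g(\Delta\mathbf{v}[N])$. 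Taking logarithms, the statistic behaves like a random walk whose per-step increment, on the event $\{N\ge\lambda\}$, is the log-likelihood ratio $\log\bigl(f(\Delta\mathbf{v}[N])/g(\Delta\mathbf{v}[N])\bigr)$ augmented by the $-\log(1-\rho)$ contribution coming from the geometric update of the prior.

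Next I would invoke the strong law of large numbers / Wald's identity for the log-likelihood ratio under $f$: by the definition of Kullback–Leibler divergence,
\[
\E_f\!\left[\log\frac{f(\Delta\mathbf{v}[n])}{g(\Delta\mathbf{v}[n])}\right] \;=\; D_{\text{KL}}(f\|g).
\]
Therefore, conditional on $\tau\ge\lambda$, after $\tau-\lambda$ post-change samples the log-statistic $\log R_\tau$ has expected drift $D_{\text{KL}}(f\|g) - \log(1-\rho)$ per step. Setting $\log R_\tau \approx \log B(\alpha) = \log((1-\alpha)/\alpha)\sim |\log\alpha|$ as $\alpha\to 0$ and equating to (expected drift)$\times$(expected delay) yields the formula
\[
D(\tau) \;=\; \frac{|\log\alpha|}{\,D_{\text{KL}}(f\|g) - \log(1-\rho)\,}.
\]

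To make this rigorous rather than heuristic I would cite Lemma~\ref{thm:detect_rule} to establish asymptotic optimality of the Shiryaev stopping rule, and then appeal to Tartakovsky's renewal-theoretic framework: the key facts are (i) the lower bound $D(\tau)\ge |\log\alpha|/(D_{\text{KL}}(f\|g)-\log(1-\rho))(1+o(1))$ follows from a Jensen/information inequality applied to any stopping rule meeting the false-alarm constraint $P(\tau<\lambda)\le\alpha$, and (ii) the matching upper bound follows by showing that the Shiryaev rule's overshoot at crossing is $O(1)$ and hence negligible relative to $|\log\alpha|$. The finiteness of $D_{\text{KL}}(f\|g)$ is guaranteed here because both $f$ and $g$ are multivariate Gaussians with full-rank covariances $\Sigma_0,\Sigma_1$, so a closed-form expression is available and a uniform integrability argument for the log-likelihood increments is straightforward.

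The main obstacle I anticipate is controlling the overshoot of $\log R_\tau$ past the threshold and justifying the interchange of the conditional expectation $\E[\tau-\lambda\,|\,\tau\ge\lambda]$ with the unconditional drift computation, especially since $\lambda$ is itself random with a geometric tail. The cleanest route is to condition on $\{\lambda=k\}$, apply the renewal theorem to the post-change walk $\log R_{k+m}-\log R_k$, and then average over $k$ using the geometric prior; the $-\log(1-\rho)$ term emerges naturally as the per-step contribution of the prior to the drift, and the $o(|\log\alpha|)$ remainder from overshoot and the pre-change fluctuations is absorbed in the $\alpha\to 0$ limit.
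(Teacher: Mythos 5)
The paper offers no proof of this lemma at all: it is stated as a known result and deferred entirely to the cited reference \cite{tartakovsky2008asymptotically}, so there is nothing in the paper for your argument to diverge from. Your sketch is a faithful reconstruction of exactly that cited argument --- the Shiryaev statistic $R_N$ with threshold $(1-\alpha)/\alpha$, the log-random-walk with post-change drift $D_{\text{KL}}(f\|g)+|\log(1-\rho)|$, and the renewal-theoretic lower/upper bounds --- and is correct up to an immaterial constant in the recursion (the exact update is $R_n = \frac{L_n}{1-\rho}\left(R_{n-1}+\rho\right)$, which does not affect the asymptotics).
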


The detection process is summarized in Algorithm~\ref{alg:outage}. As a highlight, the proposed approach does not require the grid topology.

\subsection{Line Outage Detection with Unknown Outage Pattern}
Computing the posterior probability in (\ref{eq:post}) requires knowing the parameters of distributions $g$ and $f$. The parameters of pre-outage distribution $g$ can be estimated using the historical data. For obtaining the parameters of $f$, we need to know the outage pattern as a prior. One way is trying every possible outage pattern and identifying the most similar one. However, this approach is infeasible because the outage patterns can grow exponentially with the grid size. Also, many DERs in distribution grids are not operated by the utilities. Therefore, their topology information is usually unknown \cite{weng2016distributed}.

In this section, instead of searching the most likely post-outage distribution, we propose a method to learn $f$ from data using the maximum likelihood method in Lemma~\ref{lemma:mle}. The computational complexity of our approach is insensitive to the number of out-of-service branches. 
\begin{lemma}
\label{lemma:mle}
	Using observed data $\Delta\mathbf{v}^{1:N}$, The maximum likelihood estimators of the post-outage distribution $f \sim \Gsn(\mu_1, \Sigma_1)$ are
	\begin{align}
	\widehat{\mu}_1 & = \frac{\sum_{k=1}^N \pi(k)\sum_{n=k}^N \mv[n]}{\sum_{k=1}^N \pi(k) (N-k+1)}, \label{eq:mu_est} \\
	\widehat{\Sigma}_1 & = \frac{\sum_{k=1}^N \pi(k)\sum_{n=k}^N (\mv[n] - \widehat{\mu}_1)(\mv[n] - \widehat{\mu}_1)^T}{\sum_{k=1}^N \pi(k) (N-k+1)} \label{eq:sigma_est}.
	\end{align}
\end{lemma}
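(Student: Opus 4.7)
The plan is to recognize that the stated estimators in (\ref{eq:mu_est})--(\ref{eq:sigma_est}) do not come from maximizing the marginal likelihood $P(\mv^{1:N}) = \sum_k \pi(k)\prod_{n<k}g(\mv[n])\prod_{n\ge k}f(\mv[n])$ directly — that gives implicit mixture-type first-order conditions with no closed form — but rather from maximizing the \emph{prior-expected complete-data log-likelihood}, treating $\lambda$ as a latent variable with the stated prior $\pi$. Once this objective is identified, the closed-form maximizers drop out by routine Gaussian MLE manipulations.

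First I would write the complete-data log-likelihood assuming the hypothetical change point $\lambda = k$:
\begin{equation*}
\log L(\mu_1,\Sigma_1;\mv^{1:N},k) = \log\pi(k) + \sum_{n=1}^{k-1}\log g(\mv[n]) + \sum_{n=k}^{N}\log f(\mv[n];\mu_1,\Sigma_1).
\end{equation*}
Only the last sum depends on the unknowns. Averaging over $\lambda$ with respect to the prior and dropping terms that do not depend on $(\mu_1,\Sigma_1)$ yields the working objective
\begin{equation*}
Q(\mu_1,\Sigma_1) = \sum_{k=1}^{N}\pi(k)\sum_{n=k}^{N}\log f(\mv[n];\mu_1,\Sigma_1),
\end{equation*}
into which I substitute the multivariate Gaussian log-density $\log f(\mv[n];\mu_1,\Sigma_1) = -\frac{1}{2}\log|2\pi\Sigma_1| - \frac{1}{2}(\mv[n]-\mu_1)^{T}\Sigma_1^{-1}(\mv[n]-\mu_1)$, which is justified by Assumption~\ref{ass:indept}.

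Next I would handle $\hat\mu_1$ by setting $\nabla_{\mu_1}Q = 0$. Using the identity $\nabla_{\mu_1}(\mv[n]-\mu_1)^{T}\Sigma_1^{-1}(\mv[n]-\mu_1) = -2\Sigma_1^{-1}(\mv[n]-\mu_1)$ and the fact that $\Sigma_1^{-1}$ is invertible, the first-order condition reduces to $\sum_{k=1}^{N}\pi(k)\sum_{n=k}^{N}(\mv[n]-\mu_1)=0$, which rearranges directly to (\ref{eq:mu_est}) since $\sum_{n=k}^{N}\mu_1 = (N-k+1)\mu_1$ and the denominator becomes the prior-weighted expected post-outage sample count. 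For $\hat\Sigma_1$ I would differentiate with respect to $\Sigma_1^{-1}$, using the standard matrix identities $\partial \log|\Sigma_1|/\partial \Sigma_1^{-1} = -\Sigma_1$ and $\partial\, \tr(A\Sigma_1^{-1})/\partial \Sigma_1^{-1} = A^{T}$ applied term-by-term. Substituting $\hat\mu_1$ and setting the resulting matrix equation to zero gives $\Sigma_1\sum_{k=1}^{N}\pi(k)(N-k+1) = \sum_{k=1}^{N}\pi(k)\sum_{n=k}^{N}(\mv[n]-\hat\mu_1)(\mv[n]-\hat\mu_1)^{T}$, i.e., exactly (\ref{eq:sigma_est}).

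The main obstacle is conceptual rather than computational: justifying why $Q$ is the right objective to maximize, since the ``true'' marginal-likelihood MLE is not available in closed form. I would argue that $Q$ can be viewed as a single M-step initialized from the prior (an EM-style estimator that stops after one iteration), or equivalently as the maximizer of the prior-weighted log-likelihood. A secondary technical point is confirming that the stationary point is indeed a maximum — this follows because $Q$ is, in $(\mu_1,\Sigma_1^{-1})$ after concentrating out $\mu_1$, a sum of strictly concave Gaussian log-likelihood terms with positive weights $\pi(k)$, so the unique critical point is the global maximizer.
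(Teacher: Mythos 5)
Your proposal is correct and follows essentially the same route as the paper: the paper also replaces the intractable log-sum in $\log P(\mathcal{H}_1|\Delta\mathbf{v}^{1:N})$ by the prior-weighted sum $\sum_{k}\pi(k)\sum_{n=k}^{N}\log f(\mv[n];\boldsymbol{\Theta})$ (obtained there via Jensen's inequality as a lower bound on the log-posterior, which is the same object as your expected complete-data log-likelihood / one-step EM surrogate) and then derives (\ref{eq:mu_est}) and (\ref{eq:sigma_est}) by the identical stationarity computations. Your framing is if anything cleaner, since it makes explicit that the closed forms maximize a surrogate rather than the true marginal likelihood, a point the paper only gestures at.
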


The proof of Lemma~\ref{lemma:mle} is given in Appendix~\ref{sec:parm_est}. With the estimates of $\mu_1$ and $\Sigma_1$, we can compute the posterior probability in (\ref{eq:post}) and apply the optimal detection rule in (\ref{eq:detect_rule}).

\subsection{Line Outage Detection with Voltage Magnitudes Only}
\label{sec:outage_voltage_mag}
Since PMUs have not been widely installed in distribution grids, the voltage phase angles are hard to be obtained in the real-world grids. To resolve this issue, in this section, we prove that the optimal line outage detection approach in Lemma~\ref{thm:detect_rule} only requires voltage magnitude data. We define the incremental change of voltage magnitude as $\Delta |v[n]| = |v[n]| - |v[n-1]|$ and use the random variable $\Delta |V|$ to represent the voltage magnitude change.

\begin{theorem}
\label{thm:cond_indept_magnitude}
If the change of current injection at each bus is approximately independent and no branch connects bus $i$ and bus $k$, the voltage magnitude changes at bus $i$ and bus $k$ are conditionally independent, given the voltage magnitude changes of all other buses, i.e. $\Delta |V_i| \perp  \Delta |V_k| \mid \{\Delta |V_e|,e \in \sset \backslash\{i,k\}\}$.
\end{theorem}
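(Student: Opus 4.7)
The plan is to run the same conditional independence argument as in Theorem~\ref{thm:cond_indept}, but with voltage magnitudes replacing complex voltages. In Theorem~\ref{thm:cond_indept}, conditioning on the neighbors' voltage changes reduced $\Delta V_i$ to an affine function of $\Delta I_i$ alone; the magnitude-only version needs an analogous scalar recursion of the form
\[
\Delta |V_i| \approx a_i\, \widetilde{\Delta I}_i + \sum_{e \in \mathcal{N}(i)} b_{ie}\, \Delta|V_e|,
\]
so that, after conditioning on the remaining magnitudes, $\Delta|V_i|$ depends only on a scalar noise term driven by $\Delta I_i$.

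To obtain this relation I would exploit the small-angle property of distribution grids highlighted in the introduction. Writing $V_i = |V_i| e^{j\theta_i}$ and linearizing about $\theta_i = 0$, the real part of (\ref{eq:VI}) becomes a linear recursion for $\Delta|V_i|$ in terms of $\mathrm{Re}(\Delta I_i)$, the conductance components of $Y_{ie}$, and the magnitudes $\Delta|V_e|$ at the neighbors, with the phase-angle contributions suppressed to first order. This is the decoupling step --- essentially a LinDistFlow-style approximation --- that distinguishes the magnitude-only case from the complex case of Theorem~\ref{thm:cond_indept}.

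Once the decoupling is in place, the rest mirrors the earlier argument verbatim. Condition on $\{\Delta|V_e|: e \in \sset\backslash\{i,k\}\}$; because $k \notin \mathcal{N}(i)$ and $i \notin \mathcal{N}(k)$, the residual randomness in $\Delta|V_i|$ comes from $\Delta I_i$ only, and likewise the residual randomness in $\Delta|V_k|$ comes from $\Delta I_k$ only. Assumption~\ref{ass:indept} then delivers $\Delta|V_i| \perp \Delta|V_k|$ conditional on the remaining magnitudes.

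The main obstacle is justifying the decoupling step precisely, since $|V_i|$ is a nonlinear function of the complex $V_i$ and its first-order expansion generally mixes real and imaginary parts of the neighboring voltage perturbations, whose phases are not in the conditioning set. I would address this either by invoking the LinDistFlow-style decoupled power flow --- standard and empirically validated for distribution grids under the small-angle regime --- so the magnitude equation is genuinely self-contained, or by bounding the residual phase-angle terms as higher order and showing they contribute vanishing conditional covariance between $\Delta|V_i|$ and $\Delta|V_k|$. Either route formalizes the intuition that, to leading order, magnitude dynamics decouple from phase dynamics, which is what the small-angle property buys us.
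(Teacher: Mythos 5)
Your proposal follows essentially the same route as the paper: the paper obtains the self-contained magnitude recursion by multiplying the nodal equation by $e^{-j\theta_i}$ and invoking the small-angle property to set $e^{j(\theta_e-\theta_i)}\approx 1$, after which the conditioning argument of Theorem~\ref{thm:cond_indept} is repeated verbatim with $\Delta I_i e^{-j\theta_i}$ and $\Delta I_k e^{-j\theta_k}$ as the independent residual drivers. The decoupling obstacle you flag (magnitude--phase mixing) is genuine, but the paper resolves it with exactly the small-angle approximation you propose, without the more careful residual bound you suggest as an alternative.
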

	
\begin{proof}
For bus $i$, we can rewrite (\ref{eq:VI}) as
\begin{align}
	V_i & = \frac{1}{Y_{ii}}(I_i+ \sum_{e \in \mathcal{N}(i)}V_e Y_{ie}) \nonumber \\
	V_i e^{-j\theta_i} & = \frac{1}{Y_{ii}}(I_ie^{-j\theta_i} + \sum_{e \in \mathcal{N}(i)}V_e e^{-j\theta_i} Y_{ie}) \nonumber \\
	|V_i| &=  \frac{1}{Y_{ii}}(I_ie^{-j\theta_i} + \sum_{e \in \mathcal{N}(i)}|V_e| e^{j(\theta_e - \theta_i)} Y_{ie}) \label{eq:VI_diff}.
\end{align}
In the secondary distribution grids, the phase angle difference between two neighbors' buses is relatively small \cite{kersting2012distribution}, i.e., $\theta_i - \theta_e \simeq 0$ for $e \in \mathcal{N}(i)$. Hence, (\ref{eq:VI_diff}) is approximated as
\begin{equation}
	|V_i| \simeq \frac{1}{Y_{ii}}(I_ie^{-j\theta_i} + \sum_{e \in \mathcal{N}(i)}|V_e| Y_{ie}).
\end{equation} 
For incremental change of voltage magnitude $\Delta |V_i|$, given $\Delta |V_e| = \Delta |v_e|$ for all $e \in \mathcal{S}\backslash\{i,k\}$, the equation above becomes to
\[
\Delta |V_i| = \frac{1}{Y_{ii}}(\Delta I_i e^{-j\theta_i} + \sum_{e \in \mathcal{N}(i)}\Delta |v_e| Y_{ie}).
\]
Similarly, $\Delta |V_k| = (\Delta I_k e^{-j\theta_k} + \sum_{e \in \mathcal{N}(k)}\Delta |v_e| Y_{ke})/Y_{kk}$. Since $I_i$ and $I_k$ are multiplied with constants, $\Delta I_i e^{-j\theta_i}$ and $\Delta I_k e^{-j\theta_k}$ are still independent. Hence, $\Delta |V_i|$ and $\Delta |V_k|$ are conditionally independent given $\Delta |\mathbf{V}_{\sset \backslash\{i,k\}}|$.
\end{proof}

With the proof of Theorem~\ref{thm:cond_indept_magnitude}, the optimal detection rule in (\ref{eq:detect_rule}) still holds for voltage magnitude data, i.e., 
\begin{equation}
	\tau = \inf\{ N \geq 1: P(\lambda \leq N \mid \Delta |\mathbf{v}^{1:N}|) \geq 1 - \alpha\}.
\end{equation}
For the voltage magnitude data, we can still use the maximum likelihood estimators in (\ref{eq:mu_est}) and (\ref{eq:sigma_est}) for unknown outage patterns.

\section{Out-of-Service Branch Identification}
\label{sec:outage_identify}
Identifying the out-of-service branch is important in the urban distribution grid operation. In metropolitan areas, many branches are underground and not well documented. Therefore, an efficient and accurate outage localization approach can reduce the power interruption time significantly. In the following part, we will propose a real-time outage localization method based on the voltage measurements.

\begin{lemma}
\label{thm:cond_cov}
    Assuming random vectors $\mathbf{X}$, $\mathbf{Y}$, and $\mathbf{Z}$ follow Gaussian distributions, given $\mathbf{Z} = \mathbf{z}$, if $\mathbf{X}$ and $\mathbf{Y}$ are conditionally independent, their conditional covariance is zero \cite{hastie2015statistical}. 
\end{lemma}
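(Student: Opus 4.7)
The plan is to reduce the claim to the familiar fact that independence implies zero covariance, now applied to the conditional distribution given $\mathbf{Z} = \mathbf{z}$. The Gaussian hypothesis plays essentially no role in this direction; it is invoked because the paper will want to freely move between conditional independence and vanishing conditional covariance later on, and that equivalence does require joint Gaussianity.

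First I would start from the definition of conditional covariance,
\[
\Covarg{\mathbf{X},\mathbf{Y} \mid \mathbf{Z} = \mathbf{z}} = \Earg{\mathbf{X}\mathbf{Y}^T \mid \mathbf{Z} = \mathbf{z}} - \Earg{\mathbf{X} \mid \mathbf{Z} = \mathbf{z}}\,\Earg{\mathbf{Y} \mid \mathbf{Z} = \mathbf{z}}^T,
\]
and try to show that the cross moment on the right collapses into the product of the two conditional means.

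Next, I would invoke the conditional independence hypothesis $\mathbf{X} \perp \mathbf{Y} \mid \mathbf{Z}$ in its density form, namely $p(\mathbf{x},\mathbf{y}\mid\mathbf{z}) = p(\mathbf{x}\mid\mathbf{z})\,p(\mathbf{y}\mid\mathbf{z})$, and substitute it into
\[
\Earg{\mathbf{X}\mathbf{Y}^T \mid \mathbf{Z} = \mathbf{z}} = \int\!\!\int \mathbf{x}\mathbf{y}^T\, p(\mathbf{x},\mathbf{y}\mid\mathbf{z})\, d\mathbf{x}\, d\mathbf{y}.
\]
Because the integrand factors, the double integral separates into a product of two single integrals, giving exactly $\Earg{\mathbf{X} \mid \mathbf{Z} = \mathbf{z}}\,\Earg{\mathbf{Y} \mid \mathbf{Z} = \mathbf{z}}^T$. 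Substituting back into the first display cancels the two terms and yields a zero matrix, as required.

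There is really no main obstacle; every step is by definition, and the result is standard enough that the paper simply cites it. The only subtle point worth flagging is that Gaussianity is assumed here not for the proof itself but so that the implication is reversible in the downstream localization algorithm: for jointly Gaussian $(\Delta V_i, \Delta V_k, \Delta\mathbf{V}_{\sset\setminus\{i,k\}})$, a zero off-diagonal entry in the conditional covariance is equivalent to the conditional independence established in Theorems~\ref{thm:cond_indept} and~\ref{thm:cond_indept_magnitude}, so screening potential out-of-service branches reduces to checking entries of a (conditional) covariance matrix computed from data.
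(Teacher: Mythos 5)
Your proof is correct. The paper itself gives no proof of this lemma---it is stated as a cited standard fact from \cite{hastie2015statistical}---and your factorization argument (conditional independence implies the conditional joint density factors, so the conditional cross-moment splits into the product of conditional means and the conditional covariance vanishes) is exactly the standard derivation one would supply. Your side remark is also on point: Gaussianity is irrelevant to this direction but is what makes the implication reversible, and it is the converse (zero conditional covariance implies conditional independence) that the localization step in Section~\ref{sec:outage_identify} actually relies on when it screens branch pairs via the Schur complement in (\ref{eq:cond_cov}).
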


Because of Theorem~\ref{thm:cond_indept} and Theorem~\ref{thm:cond_indept_magnitude}, the voltage changes at the two ends of the out-of-service branches are conditionally independent after an outage. Due to Lemma~\ref{thm:cond_cov}, we can compute the conditional covariance matrix of every possible pair of buses in the grid and check if the off-diagonal term changes from a non-zero element to zero. When the off-diagonal term changes to zero, we can identify the out-of-service branches. 

Usually, the conditional covariance can be estimated based on the voltage measurements. However, a large set of post-outage data is required to have an accurate estimation, and the delay of localization is long. To enable real-time outage localization, alternatively, we use the covariance matrix $\Sigma$ to compute the conditional covariance. This approach allows us to localize the outage even if we do not know the distribution grid topology. In the case that the post-outage probability distribution $f$ is unknown, we can use $\widehat{\Sigma}_1$ in (\ref{eq:sigma_est}) to compute the conditional covariance. For bus $i$ and bus $j$, suppose $\mathcal{I} = \{i,j\}$ and $\mathcal{J} = \mathcal{S}\backslash\{i,j\}$, the covariance of the joint Gaussian distribution can be decomposed as
\[
\Sigma = \begin{bmatrix}
	\Sigma_{\mathcal{I}\mathcal{I}} & \Sigma_{\mathcal{I}\mathcal{J}} \\ 
	\Sigma^T_{\mathcal{I}\mathcal{J}} & \Sigma_{\mathcal{J}\mathcal{J}}\end{bmatrix}.
\]
The conditional covariance matrix can be computed by the Schur complement \cite{boyd2004convex}, i.e.,
\begin{equation}
	\label{eq:cond_cov}
	\Sigma_{\mathcal{I}|\mathcal{J}} = \Sigma_{\mathcal{I}\mathcal{I}} - \Sigma_{\mathcal{I}\mathcal{J}}\Sigma_{\mathcal{J}\mathcal{J}}^{-1}\Sigma^T_{\mathcal{I}\mathcal{J}}.
\end{equation}
If the voltages at bus $i$ and bus $j$ are conditionally independent, the off-diagonal term of $\Sigma_{\mathcal{I}|\mathcal{J}}$ is zero, i.e., $\Sigma_{\mathcal{I}|\mathcal{J}}(1,2) = \Sigma_{\mathcal{I}|\mathcal{J}}(2,1) = 0$. Therefore, we can compare the conditional covariance of every bus pairs before and after an outage. If the conditional covariance changes to zero after an outage, we localize one line outage event. This computation can be repeated when $\widehat{\Sigma}_1$ is updated based on the latest available measurements. In Section~\ref{sec:num}, we illustrate the similar performances using the true post-outage covariance matrix $\Sigma_1$ and the estimated covariance matrix $\widehat{\Sigma}_1$. 

\putFig{8bus_loop}{An 8-bus system. A node represents a bus and a line represents a branch. The dashed lines are additional branches with the same admittance as the branch connected bus $7$ and bus $8$.}{0.5\linewidth}

Fig.~\ref{fig:outage_localization} visualizes the conditional correlation of a 8-bus system with loops (see Fig.~\ref{fig:8bus_loop}) before and after branch 2-6 is out-of-service. The conditional correlation between bus $i$ and bus $j$ is defined as
\begin{equation}
\label{eq:cond_corr}	
\rho_{i,j} = \frac{\Sigma_{\mathcal{I}|\mathcal{J}}(1,2)}{\sqrt{\Sigma_{\mathcal{I}|\mathcal{J}}(1,1) \times \Sigma_{\mathcal{I}|\mathcal{J}}(2,2)}}.
\end{equation}

We can observe that the conditional correlation between bus 2 and bus 6 has the most significant change. Therefore, we can locate the out-of-service branch is the branch 2-6.

\begin{figure}[htbp]
    \centering
    \subfloat[Pre-outage\label{fig:new-pre-outage}]{
    \includegraphics[width=0.47\linewidth]{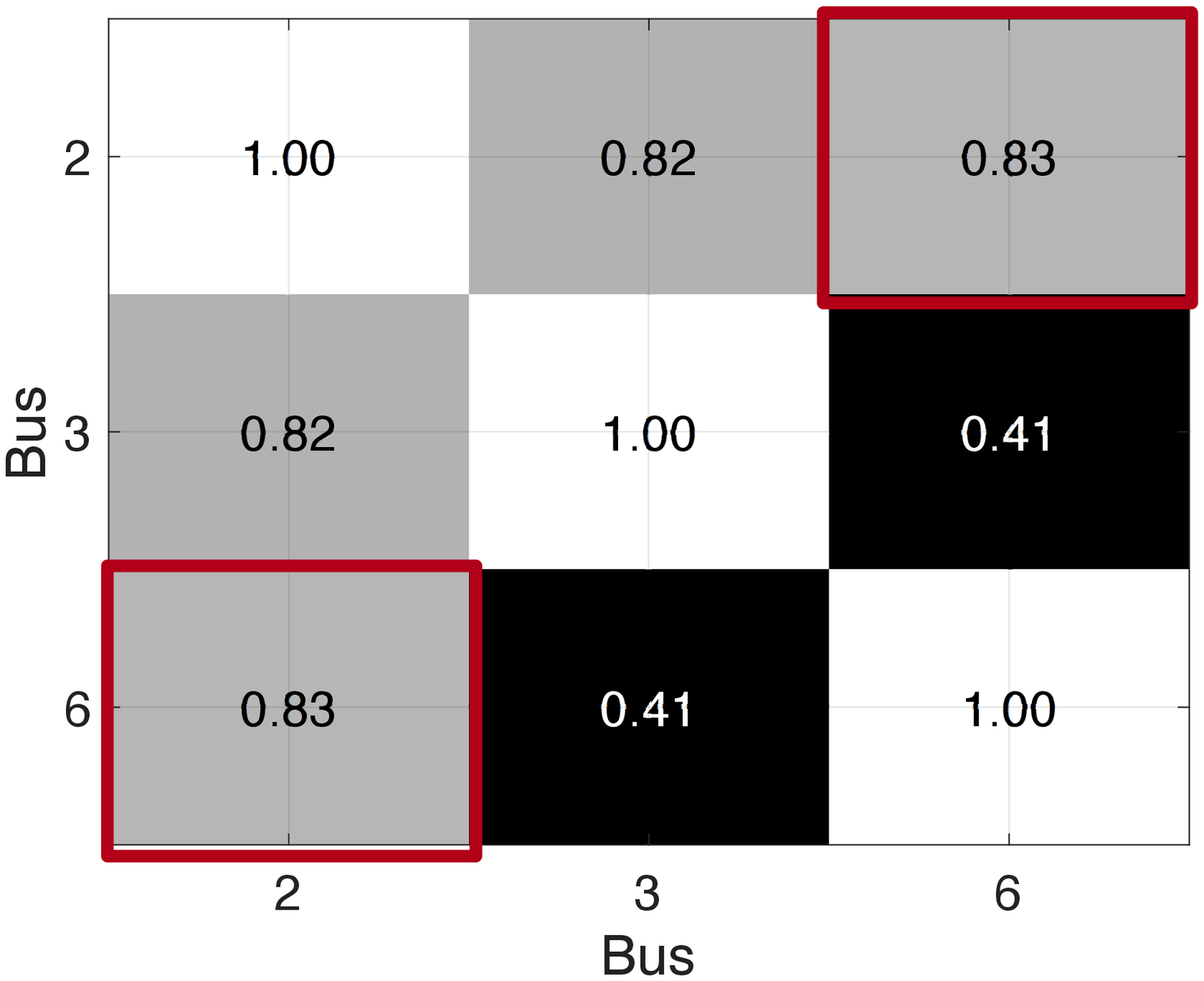}
    }
    \hfill
    \subfloat[Post-outage\label{fig:new-pre-outage}]{
    \includegraphics[width=0.47\linewidth]{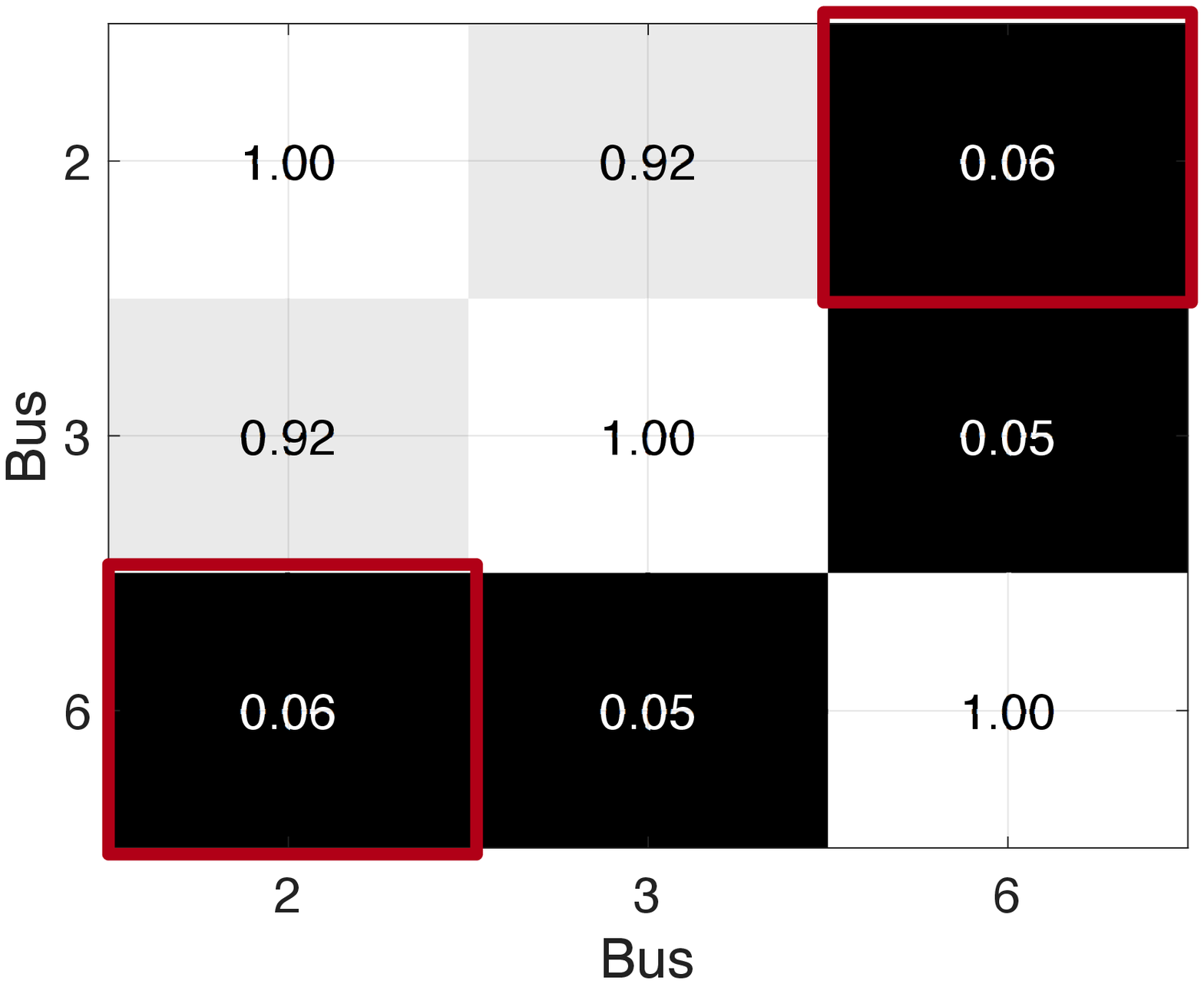}
    }
    \caption{Absolute conditional correlation before and after an outage (branch 2-6).}
    \label{fig:outage_localization}
\end{figure}

We summarize the proposed line outage detection and localization algorithm in Algorithm~\ref{alg:outage}. If only voltage magnitudes are available, we can apply the same procedure using $\Delta |\mathbf{v}_\sset^{1:N}|$.

\begin{algorithm}[h!]
\caption{Distribution Grid Line Outage Identification}
\label{alg:outage}
\begin{algorithmic}[1]
\STATE At each time $N$:
\IF {parameters of post-outage distribution $f$ are unknown}
    \STATE estimate $\widehat{\mu}_1$ and $\widehat{\Sigma}_1$ using (\ref{eq:mu_est}) and (\ref{eq:sigma_est}) with the observed data $\mv_\sset^{1:N}$
\ENDIF
\STATE Compute $P(\mathcal{H}_1|\mv_\sset^{1:N})$ by (\ref{eq:post}).
\IF {$P(\mathcal{H}_1|\mv_\sset^{1:N}) \geq 1-\alpha$}
    \STATE Report an outage event and $\tau = N$
    \STATE Compute $\Sigma_{\mathcal{I}|\mathcal{J}}$ by (\ref{eq:cond_cov}) using $\Sigma_1$ or $\widehat{\Sigma}_1$ for every pair of buses 
    \IF {$\Sigma_{\mathcal{I}|\mathcal{J}} = 0$ for $\mathcal{I} = \{i,j\}$}
        \STATE Report the branch between bus $i$ and bus $j$ is out-of-service
    \ENDIF
\ENDIF
\end{algorithmic}
\end{algorithm}

\rev{At time $N$, the computational complexity of outage detection only depends on the grid size, as shown in (\ref{eq:post}) and (\ref{eq:detect_rule}). As presented in (\ref{eq:cond_cov}) and (\ref{eq:cond_corr}), the computational complexity of outage localization also only depends on the grid size.} In our numerical simulations, for distribution grids with up to $200$ buses, the process outlined in Algorithm~\ref{alg:outage} can be completed within 10 seconds using a modern desktop computer at each time $N$. Compared with smart meter sampling rate, which is usually ranging from 1 minute to 1 hour, the computational delay of the outage identification is negligible. Hence, the proposed line outage identification algorithm can be used for real-time applications.  

    
    



\section{Simulation and Results}\label{sec:num}
The simulations are implemented on the IEEE PES distribution networks for IEEE $8$-bus and $123$-bus networks \cite{kersting2001radial} and six European distribution grids \cite{pretticodistribution}. To validate the performance of the proposed approach on loopy networks, we add several branches to create loops in all systems. The loopy $8$-bus system is shown in Fig.~\ref{fig:8bus_loop}. For $123$-bus system, we add a branch between bus 77 and bus 120 and the other branch between bus 50 and bus 56. The admittance are the same as the branch between bus 122 and bus 123. For European systems, the loopy modifications are detailed in \cite{liao2018urban}. In each network, bus $1$ is selected as the slack bus. The historical data have been preprocessed by the MATLAB Power System Simulation Package (MATPOWER) \cite{Zimmerman10}. 

We use the real power profile of distribution grids from Pacific Gas and Electric Company (PG\&E) in the subsequent simulation. This profile contains anonymized and secure smart meter readings over $110,000$ PG\&E residential customers for one year spanning from $2011$ to $2012$. The reactive power $q_i[n]$ at bus $i$ and time $n$ is computed according to a randomly generated power factor $pf_i[n]$, which follows a uniform distribution, e.g. $pf_i[n] \sim \Unif(0.8,1)$. 
To obtain measurements form voltage phasors at time $n$, i.e. $v_i[n]$, we run a power flow to generate the states of the power system. To obtain time-series data, we run the power flow to generate voltage data over a year. 


In this simulation, we considered three common outage scenarios:
\begin{enumerate}
    \item Mesh networks. In this system, after an outage, most buses will not have zero voltages because they can receive powers from multiple branches. This outage scenario usually happens in urban areas.
    \item Radial networks with high DER penetrations. In this case, some buses will be disconnected from the main grid. However, they are still powered by DERs and thus, their voltages will not be zero. This outage case is a typical scenario in residential areas. 
    \item Radial networks without DERs. In this case, when a line outage occurs, some buses will be disconnected from the main grid and have zero voltage magnitudes. These smart meters stop to transmit measurements, but they send last gasp messages before disconnecting from the grids. Therefore, we can set measurements from all smart meters that send last gasp message zero. Because the bus voltages have no variation after outages, our method can quickly detect and localize this type of outages.
\end{enumerate}

When multiple induction motors are presented in distribution grids,  residual voltages may exist after the terminal buses disconnect from the main grid \cite{akiyama1990induction}. If the residual voltage is above smart meter measurement threshold and lasts for a certain period of time (e.g., a few minutes to an hour, depending on the smart meter sampling frequency), the outage case is similar to outage scenario 2 above. If the residual voltage is below the measurement threshold, smart meters may not report  measurements. In this case, smart meters send last gasp signal and the outage detection case is similar to the outage scenario 3 above.

\subsection{Outage Detection in Mesh Distribution Grids}
\label{sec:mesh_sim}
Fig.~\ref{fig:8bus_loop_det} illustrates the complementary posterior probability $1-P(\mathcal{H}_1|\mv^{1:N})$ for detecting two line outages in loopy 8-bus system (Fig.~\ref{fig:8bus_loop}) based on voltage magnitude data $\Delta |\mathbf{V}_\mathcal{S}|$. In this test, branches 3-4 and 2-6 have outages. The false alarm rate is $10^{-6}$. For the complementary posterior probability, the threshold is $\alpha = 10^{-6}$.
To have a better understanding of how our proposed outage detection algorithm works, we assign a uninformative parameter for the prior distribution, i.e., $\rho = 10^{-4}$. The outage time is $\lambda = 21$. When the parameters of post-outage distribution are known, the complementary posterior probability immediately drops below the threshold at $N=21$. When the parameters are unknown, one more time step is required to achieve detectable probability. Since the voltage magnitudes are collected every hour, the additional delay is one hour when the outage pattern is unknown. We want to highlight that although the delay is one hour, the customers do not experience power outage because of the mesh structure. Later, we show that we can reduce the latency by increasing the sampling frequency of smart meters.

\putFig{8bus_loop_det}{Complementary posterior probability for outage detection. The branches 3-4 and 2-6 have outage. $\alpha = 10^{-6},\rho = 10^{-4}$.}{\linewidth}

In Fig.~\ref{fig:123busloop_delay}, the expected delay divided by \rev{$|\log(\alpha)|$} is plotted as a function of \rev{$|\log(\alpha)|$} for two cases: $f$ is known and $f$ is unknown. The choices of abscissa and ordinate are motivated by Lemma~\ref{thm:opt_delay}. Specifically, the asymptotically optimal detection delay in Lemma~\ref{thm:opt_delay} can be rewritten as
\rev{
\[
\frac{D(\tau)}{|\log(\alpha)|} = \frac{1}{-\log(1-\rho)+D_\text{KL}(f\|g)}.
\]
}
For a particular outage pattern, the KL distance between the pre-outage distribution $g$ and the post-outage distribution $f$ is fixed. Additionally, if the prior distribution is known, $-\log(1-\rho)+D_\text{KL}(f\|g)$ is a constant. Hence, the detection delay $D(\tau)$ becomes a function of probability of false alarm $\alpha$. Plotting the relationship between \rev{$|\log(\alpha)|$} and \rev{$\frac{D(\tau)}{|\log(\alpha)|}$} helps to explore the asymptotical property of the proposed algorithm. We also show the limiting value of the normalized asymptotically optimal detection delay $1/(-\log(1-\rho)+D_\text{KL}(f\|g))$ in Fig.~\ref{fig:123busloop_delay}. All plots are generated by Monte Carlo simulation over $1,000$ replications. In this simulation, the prior distribution of outage time $\lambda$ has a geometric probability distribution with parameter $\rho = 0.04$. The start time of test is randomly selected within one year. In Fig.~\ref{fig:123busloop_delay}, our approach, which learns the parameters of the post-outage distribution from the voltage measurements, has identical performances as the optimal method that has known $f$. Also, our approach can achieve the optimal expected detection delay asymptotically. As shown in Fig.~\ref{fig:123busloop_delay}, when the false alarm rate $\alpha$ is small, our approach can report the outage immediately (i.e., detection delay is less than one hour), which can significantly reduce the impacts of power outages. In \cite{banerjee2014power}, an optimal change-point detection approach is proposed to identify line outages in transmission grids using PMU data. Although the grid type is different, our method has the similar performance as \cite{banerjee2014power} and both converge to the asymptotical detection delay bound $1/(-\log(1-\rho)+D_\text{KL}(f\|g))$. \rev{Specifically, for the loopy 123-bus system, with $\alpha=10^{-5}$, our algorithm needs $4.89$ time steps to detect outages by using $\Delta|\mathbf{V}_\mathcal{S}|$. The algorithm in \cite{banerjee2014power} uses $\Delta\mathbf{V}_\mathcal{S}$ for outage detection and requires $4.91$ time steps to detect outages.} Hence, both methods need the same amount of data for detecting outages but our method only requires the smart meter data.
 
\putFig{123busloop_delay}{\rev{Plots of the slope $\frac{D(\tau)}{|\log(\alpha)|}$ against $|\log(\alpha)|$ for outage detection for loopy $123$-bus system.} False alarm rate $\alpha$ ranges in $[0.5,10^{-20}]$. Branch 73-74 has an outage.}{\linewidth}

\subsection{Outage Detection in Radial Distribution Grids with DERs}
\label{sec:radial_sim}
In a radial distribution grid, a line outage will lead to several isolated islands. However, with the integration of DERs, such as solar panels and batteries, some buses can still receive powers. In mesh systems, the continuous power supply from DERs also makes the outage detection difficult. In this section, we simulate the line outage in IEEE 8-bus and 123-bus systems and six European medium- and low-voltage distribution systems based on voltage magnitude data \cite{pretticodistribution, liao2018urban}. Similar to the previous section, we randomly select the start time within one year. Also, we select a few buses in the distribution grid to have solar power generator with a battery as the storage. Thus, there is a power supply during the entire day. If the battery is unavailable, the outage can be directly detected when the nodal voltages are zero. For the solar panel, we use the power generation profile computed by PVWatts Calculator, an online application developed by the National Renewable Energy Laboratory (NREL) \cite{dobos2014pvwatts}. The solar power generation profile is computed based on the weather history in North California and the physical parameters of ten 5kW solar panels. The power factor is fixed as $0.90$ lagging, which satisfies the regulation of many U.S. utilities \cite{ellis2012review} and the IEEE standard \cite{ieee2014guide}.

\begin{table}[h!]
\caption{Average Detection Delay (Time Step) of Line Outage Detection in Distribution Grids with DERs. $\alpha = 10^{-5}$. The post-outage Distribution $f$ is Unknown.}
\centering
	\begin{tabular}{|c||c|c|c|c|}
	\hline
	System & Total & Total & $\Delta \mathbf{V}_\sset$ & $\Delta |\mathbf{V}_\sset|$ \\
	& Branches & DER & (1 min) & (60 min) \\
	\hline
	8-bus & 7 & 8 & 0.12 & 0.12 \\
	\hline
	8-bus, 2 loops & 9 & 8 & 0.13 & 0.15 \\
	\hline
	123-bus & 122 & 12 & 3.62 & 4.77 \\
	\hline
	123-bus, 2 loops & 124 & 12 & 3.53 & 4.89 \\
	\hline
	\textit{LV\_suburban} & 114 & 10 & 2.81 & 5.00 \\
	\hline
	\textit{LV\_suburban} & 114 & 20 & 2.99 & 5.00 \\
	\hline
	\textit{LV\_suburban} & 114 & 33 & 3.23 & 5.00 \\
	\hline
	\textit{LV\_suburban\_mesh} & 129 & 33 & 4.95 & 5.83 \\
	15 loops &&&& \\
	\hline
	\textit{MV\_urban} & 34 & 7 & 1.11 & 2.02  \\
	\hline
	\textit{MV\_urban} & 35 & 7 & 1.11 & 1.29 \\
	switch 34-35, 1 loop &&&& \\ 
	\hline
	\textit{MV\_urban} & 37 & 7  & 1.12 & 1.29 \\
	3 switches, 3 loops &&&& \\ 
	\hline
	\textit{MV\_two\_stations} & 46 & 10 & 0.92 & 1.33 \\
	\hline
	\textit{MV\_two\_stations} & 48 & 10 & 0.87 & 1.35 \\
	2 switches, 2 loops &&&& \\
	\hline
	\textit{MV\_rural} & 116 & 20 & 1.13 & 2.44 \\
	\hline
	\textit{MV\_rural} & 119 & 20 & 1.98 & 3.01 \\
	3 switches, 3 loops &&&& \\
	\hline
	\textit{Urban} & 3237 & 300 & 11.89 & 29.23 \\
	\hline
	\textit{LV\_large}, 465 loops & 4030 & 300 & 33.29 & 88.40 \\
	\hline
	\end{tabular}
\label{tab:detection_DER}
\end{table}

Table~\ref{tab:detection_DER} summarizes the average detection delay in eight distribution grids with $14$ configurations. In each network, we compare the detection performance between voltage magnitude and phase ($\Delta \mathbf{V}_\sset$) and voltage magnitude only ($\Delta |\mathbf{V}_\sset|$). We choose $\Delta \mathbf{V}_\sset$ with $1$ minute sampling rate to demonstrate the relative faster metering speed and compare to $\Delta \mathbf{V}_\sset$ with $1$ hour for normal smart meters data. We use linear interpolation method to generate the $1$ minute data from the hourly power profile. Although the sampling frequencies are different, the additional amount of voltage magnitude data for outage detection is relatively small (1-3 time steps) for most networks. This highlights that using voltage magnitude can achieve the similar detection performance as using both voltage magnitude and phase angles. Compared with the distribution grid line outage identification method proposed in \cite{sevlian2017outage}, our approach needs fewer samples with the same probability of false alarm. \rev{For the IEEE 123-bus system, which is a radial network, our algorithm has a detection delay of $4.77$ time steps using $\Delta |\mathbf{V}_\mathcal{S}|$ with $\alpha=10^{-5}$ and the method in \cite{sevlian2017outage} has a delay of $10.45$ time steps with the same $\alpha$. A note is that we do not optimize the sensor placement for the approach in \cite{sevlian2017outage}, which may reduce the detection delay.} Also, the method in \cite{sevlian2017outage} can only be applied to radial networks but ours can be deployed to both radial and mesh grids.

For large-scale distribution grids, we need more data to detect outages when only voltage magnitudes are available. The reason is that the dimension of the covariance matrix is high and more data are needed for accurate estimation. When some grid topology information is known, this issue can be addressed by decomposing the covariance matrix since the distribution grid is usually sparse. For example, in the MV distribution grid presented in Fig.~\ref{fig:MV_rural}, there are multiple LV distribution grids and each of them is connected via a common MV grid. Therefore, we only need to identify outage within each LV grid and apply another outage detector for the MV grid. Such way can help to reduce the computational complexity. Another case is that we can split the grid into different clusters when a detailed topology is available. In Fig~\ref{fig:LV_semiurban}, we can apply outage identification to each clusters for buses and, hence, reduce the dimension of covariance matrix.

\putFig{MV_rural}{A rural median voltage distribution grid (MV\_rural) \cite{pretticodistribution}. Each color represents one medium voltage branch. The dots represent the substations and the low voltage grids are connected via these substations.}{0.8\linewidth}
\putFig{LV_semiurban}{A suburban low voltage network (LV\_suburban)\cite{pretticodistribution}.}{\linewidth}

The roof-top solar power generation can highly correlated within one LV distribution grid and may jeopardize Assumption~\ref{ass:indept}. To validate our algorithm in this scenario, we use the data from Pecan Street \cite{nagasawa2012data}, which contains hourly load measurements for 345 houses with roof-top PV integrations in Austin, Taxes. The measurements include both power consumption and renewable generation. Table~\ref{tab:pecan} summarizes the average detection delay using Pecan Street data for both radial and mesh distribution grids. Compared wit the results of the same grid in Table~\ref{tab:detection_DER}, we do not observe any major performance degradation. Hence, the results in Table~\ref{tab:pecan} demonstrate that our proposed algorithm can be applied to high-penetration grids. 

\begin{table}[h!]
\caption{Average Detection Delay (Time Step) of Line Outage Detection using Pecan Street Data. $\alpha = 10^{-5}$. The post-outage Distribution $f$ is Unknown.}
\centering
	\begin{tabular}{|c||c|c|c|c|}
	\hline
	System & Total & Total & $\Delta \mathbf{V}_\sset$ & $\Delta |\mathbf{V}_\sset|$ \\
	& Branches & DER & (1 min) & (60 min) \\
	\hline
	123-bus & 122 & 12 & 2.91 & 5.05 \\
	\hline
	123-bus, 2 loops & 124 & 12 & 3.37 & 4.59 \\
	\hline
	\textit{LV\_suburban\_mesh} & 129 & 33 & 5.08 & 5.70 \\
	15 loops &&&& \\
	\hline
	\textit{MV\_urban} & 34 & 7 & 1.28 & 3.11  \\
	\hline
	\end{tabular}
\label{tab:pecan}
\end{table}

\subsection{Line Outage Localization}
When a branch has an outage, the conditional correlation defined in (\ref{eq:cond_corr}) becomes zero. Fig.~\ref{fig:outage_line} shows the absolute conditional correlation $|\rho_{i,j}|$ of the loopy $8$-bus system in Fig.~\ref{fig:8bus_loop} after branch 3-4 and branch 2-6 have outages. The red boxes indicate the branches that have outages. When the post-outage distribution $f$ is known, the true $\Sigma_1$ is used to compute the conditional correlation. Comparing Fig.~\ref{fig:pre} and \ref{fig:post}, clearly, the absolute conditional corrections of outage branches change to zero after outages. The diagonal terms are the self-correlation and equal to one. This observation indicates that this proposed outage localization method is sensitive to outages and validates our proof in Theorem~\ref{thm:cond_indept}. When $f$ is unknown, by comparing Fig.~\ref{fig:pre} and \ref{fig:post-unknown}, we can still identify the outage lines. Therefore, the proposed method can still localize the out-of-service branches as accurate as the optimal approach.

\begin{figure}[h!]
    \centering
  \subfloat[Pre-outage\label{fig:pre}]{%
       \includegraphics[width=0.49\linewidth]{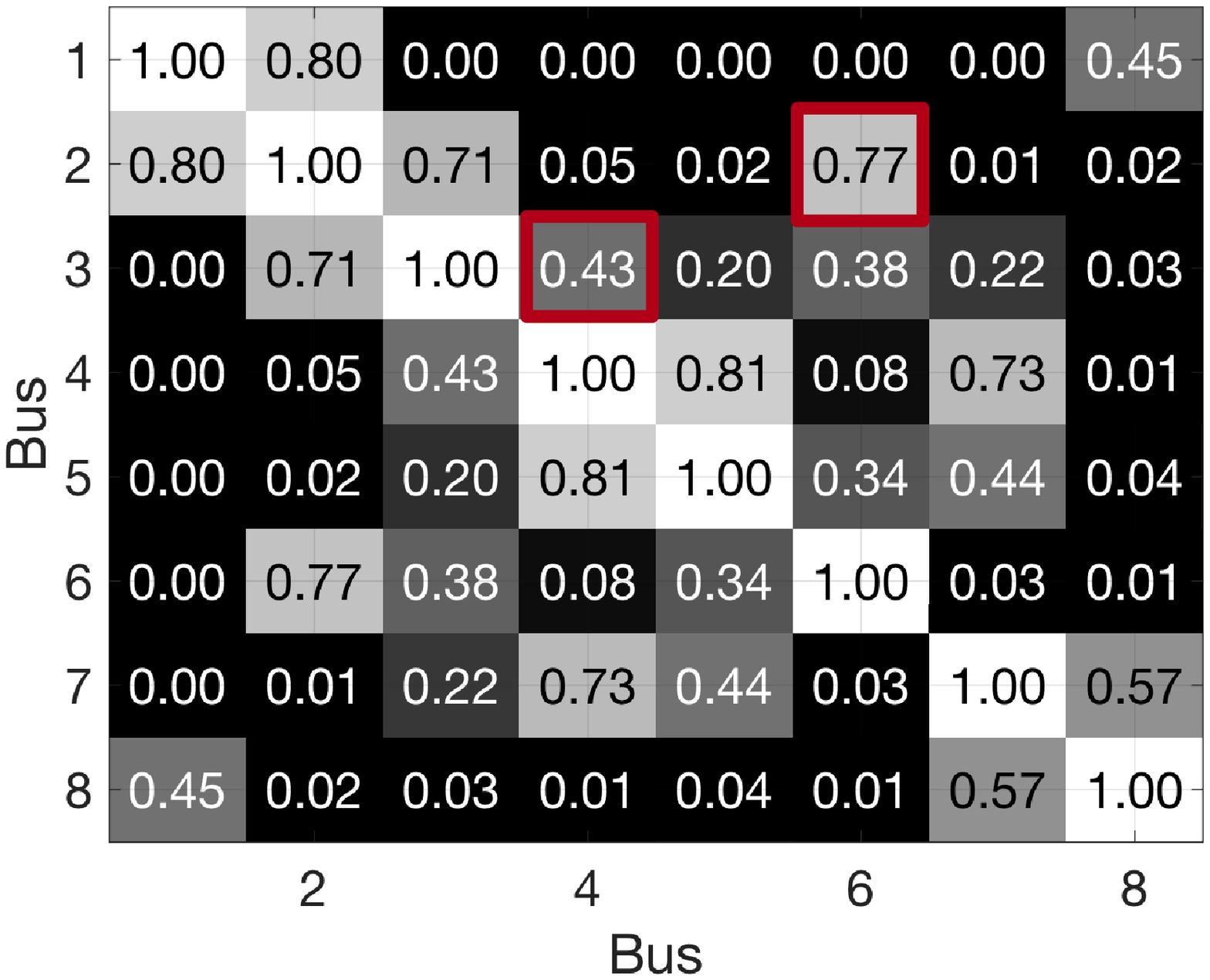}}
    \hfill
  \subfloat[Post-outage\label{fig:post}]{%
        \includegraphics[width=0.49\linewidth]{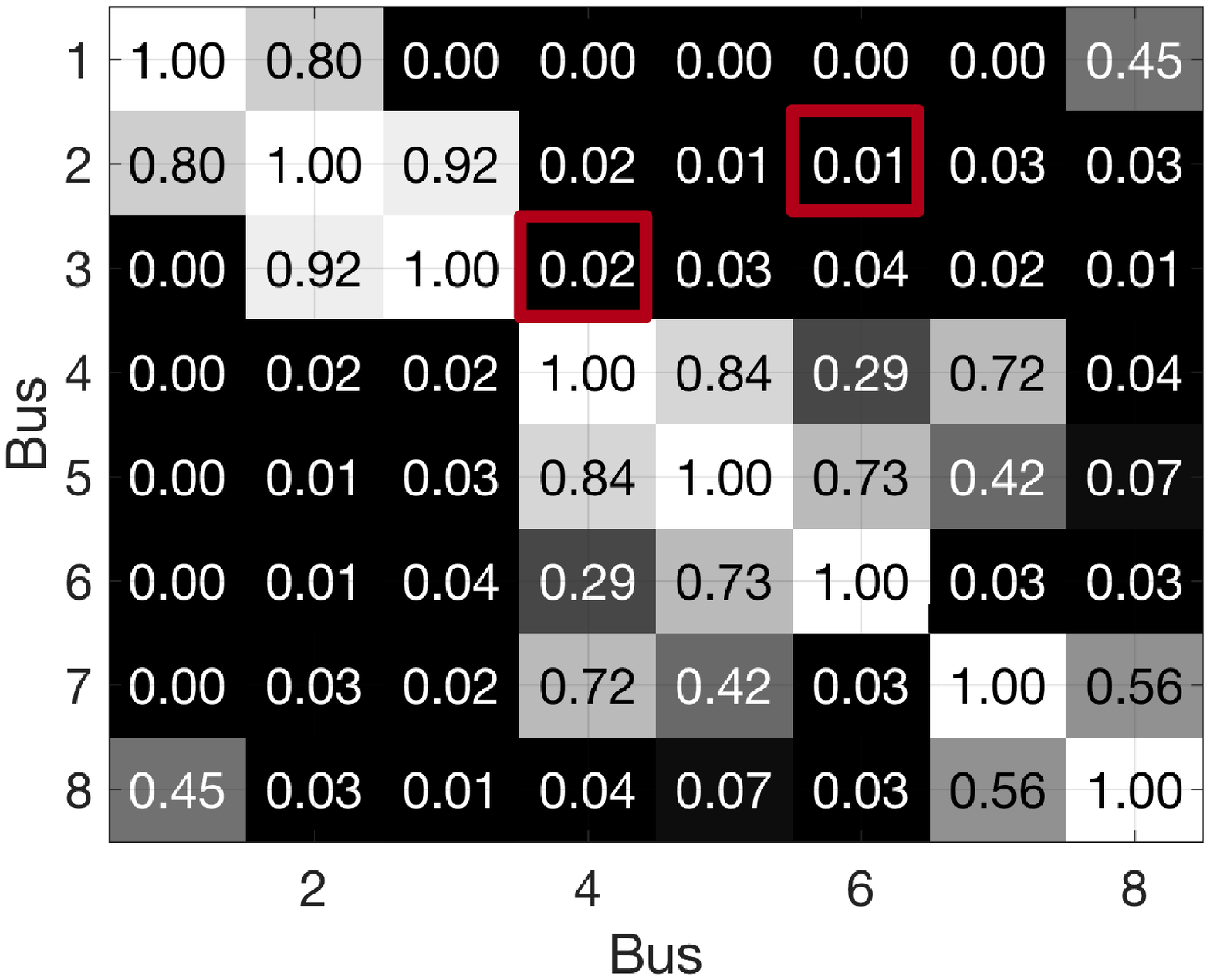}}
\\
  \subfloat[Post-outage with unknown distribution\label{fig:post-unknown}]{%
        \includegraphics[width=0.49\linewidth]{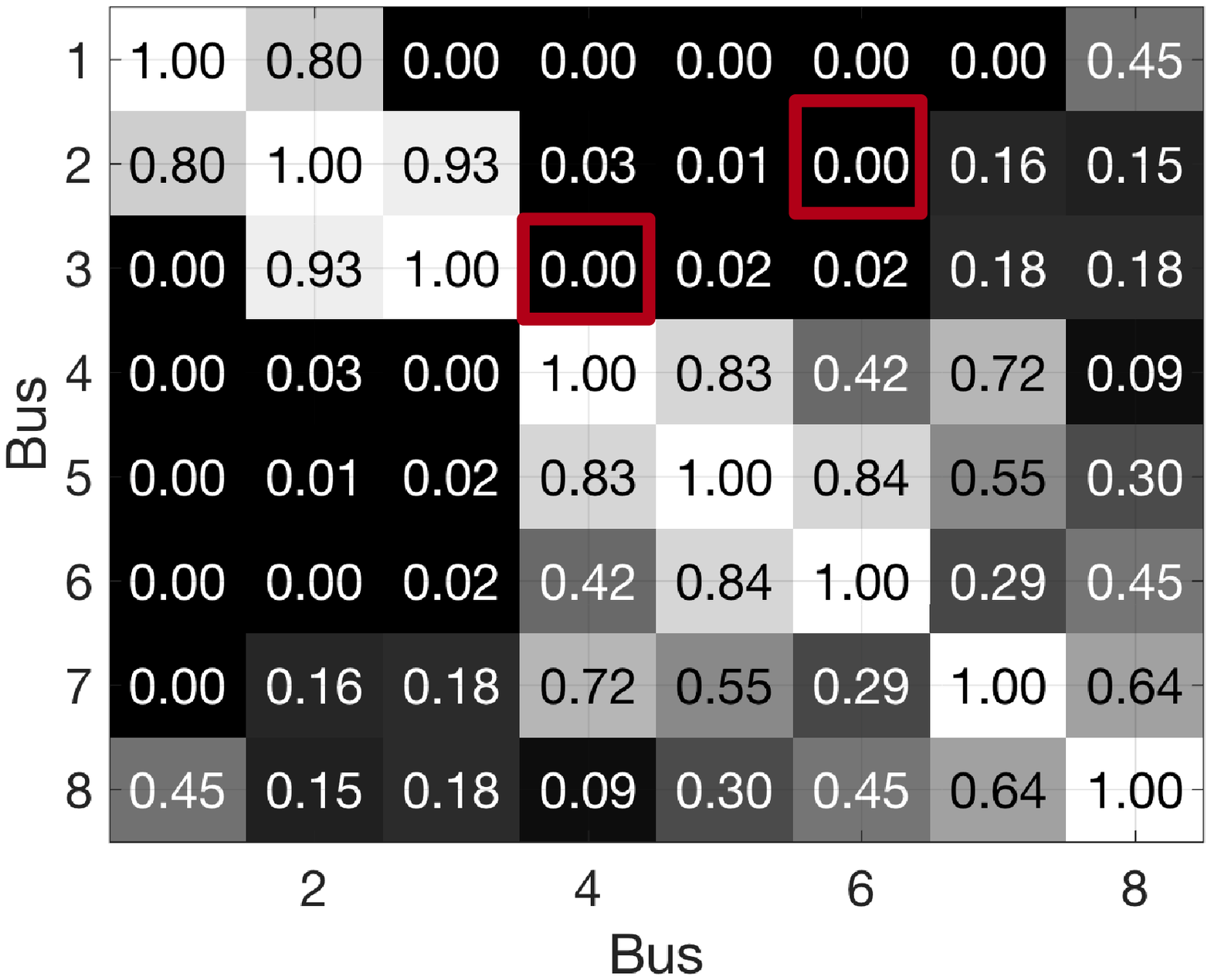}}
  \caption{Absolute conditional correlation of $8$-bus system before (a) and after (b \& c) an outage (Branches 3-4 and 2-6).}
    \label{fig:outage_line} 
\end{figure}


\subsection{Sensitivity to Data Resolutions} 
The ``ADRES-Concept'' project load profile \cite{Einfalt11, VUT16} is used to understand the proposed approach's sensitivity to data resolution. This data set contains real and reactive power profiles of 30 houses in Upper-Austria. The data were sampled every second over 14 days. The voltage data are generated using a subset of \textit{LV\_suburban\_mesh} grid with 33 DERs integrated. To simulate the damage patterns, we randomly set two branches to be out-of-service. Fig.~\ref{fig:data_resolution} shows the average detection delay with different data resolutions. The results are produced using Monte Carlo simulation over $1,000$ replications. We can see that with the increase of data resolution, the average detection delay is decreasing. The reason is that the distribution change is more significant when data resolution is large. However, for the absolute detection time delay, the high resolution data sources require less time. For example, only 5 seconds are needed to detect outages when the sampling rate is 1 second. As the sampling frequency is reduced, less data samples are required for detection. When the sampling rate is 1 minute, the proposed algorithm needs less than 3 minutes to detect outages. When the sampling rate is 30 minutes and one hour, the average detection delay is zero. \rev{Therefore, the major bottleneck of the detection delay is the sampling period. As discussed in Section~\ref{sec:outage_identify}, the computational complexity of both outage detection and localization only depends on the grid size. The computational time of a grid with less than $200$ buses is within $10$ seconds. Most smart meter systems have a sampling rate between 1 minute and 1 hour today. Hence, our algorithm can immediately detect the outage when the post-outage measurement is available.}

\putFig{data_resolution}{The average detection delay with different data resolutions. $\alpha=10^{-5}$.}{\linewidth}

\subsection{Sensitivity to Data Accuracy}
Smart meter measurements are usually noisy. Thus, the analysis of our algorithm under different levels of measurement noises is critical for understanding the performance in a real-world scenario. In the U.S., ANSI C12.20 standard (Class 0.5) permits the utility smart meters to have an error within $\pm 0.5\%$ \cite{zheng2013smart, ansc12}. The standards in other countries have the similar requirement, e.g.,\cite{stategridmeter}. Table~\ref{tab:noise} shows the average detection delay with different noise levels over 1,000 iterations. The simulation setup is identical to the one in Section~\ref{sec:mesh_sim} and \ref{sec:radial_sim}. Hence, the data resolution is one hour. When noise level is less than $0.1\%$, the detection delay is similar to the detection delay of noiseless measurements. Since most measurement noises are zero-mean additive noise and we use the measurements' sufficient statistics for outage detection, the noise only impacts the estimation of covariance matrix. When noise level is $0.2\%$, one more data point is needed for detection.

\begin{table}[h!]
	\caption{Average Detection Delay (Time Step) of Line Outage Detection with DERs under Different Noise Levels. $\alpha = 10^{-5}$. The Post-outage Distribution $f$ is Unknown. Only Voltage Magnitudes $\Delta |\mathbf{V}_\mathcal{S}|$ are Used.}
	\centering
	\begin{tabular}{|c||c|c|}
	\hline
	Noise level & \textit{LV\_suburban\_mesh} & \textit{MV\_rural} \\
	\hline
	$0\%$ & 5.83 & 1.29 \\
	\hline
	$0.05\%$ & 5.42 & 1.32 \\
	\hline
	$0.1\%$ & 6.22 & 1.83  \\
	\hline
	$0.2\%$ & 7.90 & 2.53 \\
	\hline
	\end{tabular}
	\label{tab:noise}
\end{table}

Other types of device malfunctions may also impact the algorithm performances. For example, if the smart meter is not well calibrated, it may consistently produce measurements that are lack of precision and accuracy. For the proposed method, rather than directly use the raw measurements, we use sufficient statistics of data, e.g., mean and variance, for outage detection. Thus, if the systematic error persists across all measurements, our algorithm can still report outages. The anomaly data can degrade the performance of our proposed algorithm because the underlying data statistics may change due to non-outage events. There are multiple ways to minimize impacts. For example, data cleansing can be applied before processing data for outage identification. Also, we can reduce the probability of false alarm $\alpha$ to increase the confidence of outage report.

\subsection{Short-Circuit Faults Identification}
\label{sec:short_circuit}
\rev{
Besides line outages, another category of fault in distribution grids is short-circuit faults. In this subsection, we simulate the short-circuit fault in the radial IEEE 123-bus system via CYME power system analysis software. The fault scenarios are summarized in Table~\ref{tab:short_circuit}. In the short-circuit fault analysis, the data collected before $n=20$ are pre-fault and the measurements collected after $n=20$ are post-fault. As plotted in Fig.~\ref{fig:short_circuit_fault}, the means of voltage magnitudes $|V[n]|$ change significantly after faults. For fault type LG, LL, and LLG, the nodal nominal voltages drop to a non-zero value. These fault types are similar to the outage scenario 1, which we have discussed earlier in this section. As shown in Table.~\ref{tab:short_circuit}, less than one time step is needed to detect these faults. As the reference, to detect line outage between bus 67 and bus 160, the proposed algorithm has zero detection delay. For LLL and LLLG, the nodal nominal voltages drop to zero. Thus, the detection case is similar to the outage scenario 3. Our algorithm can immediately detect faults after they occurred. 
}
\begin{table}[h!]
    \caption{Short-Circuit Fault Types in the radial IEEE 123-bus System}
    \label{tab:short_circuit}
    \centering
    \begin{tabular}{|c|c|c|}
    \hline
   Branch & Fault Type & Detection Delay\\
   &&  (Time Step) $D(\tau)$ \\
    \hline
    67-160 & Single line-to-ground fault (LG) & 0.9\\
    & on phase A & \\
    \hline
    67-160 & Line-to-line fault (LL) & 0.5\\
    & on phase AB & \\
    \hline
    67-160 & Double Line-to-ground fault (LLG)  & 0\\
    & on phase AB & \\
    \hline
    67-160 & Three-phase short-circuit fault (LLL) & 0 \\
    \hline
    67-160 & Three-phase-to-ground fault (LLLG) & 0\\
    \hline
    \end{tabular}
\end{table}

\begin{figure}[h!]
    \centering
    \subfloat[Bus 67\label{fig:bus67}]{
    \includegraphics[width=0.98\linewidth]{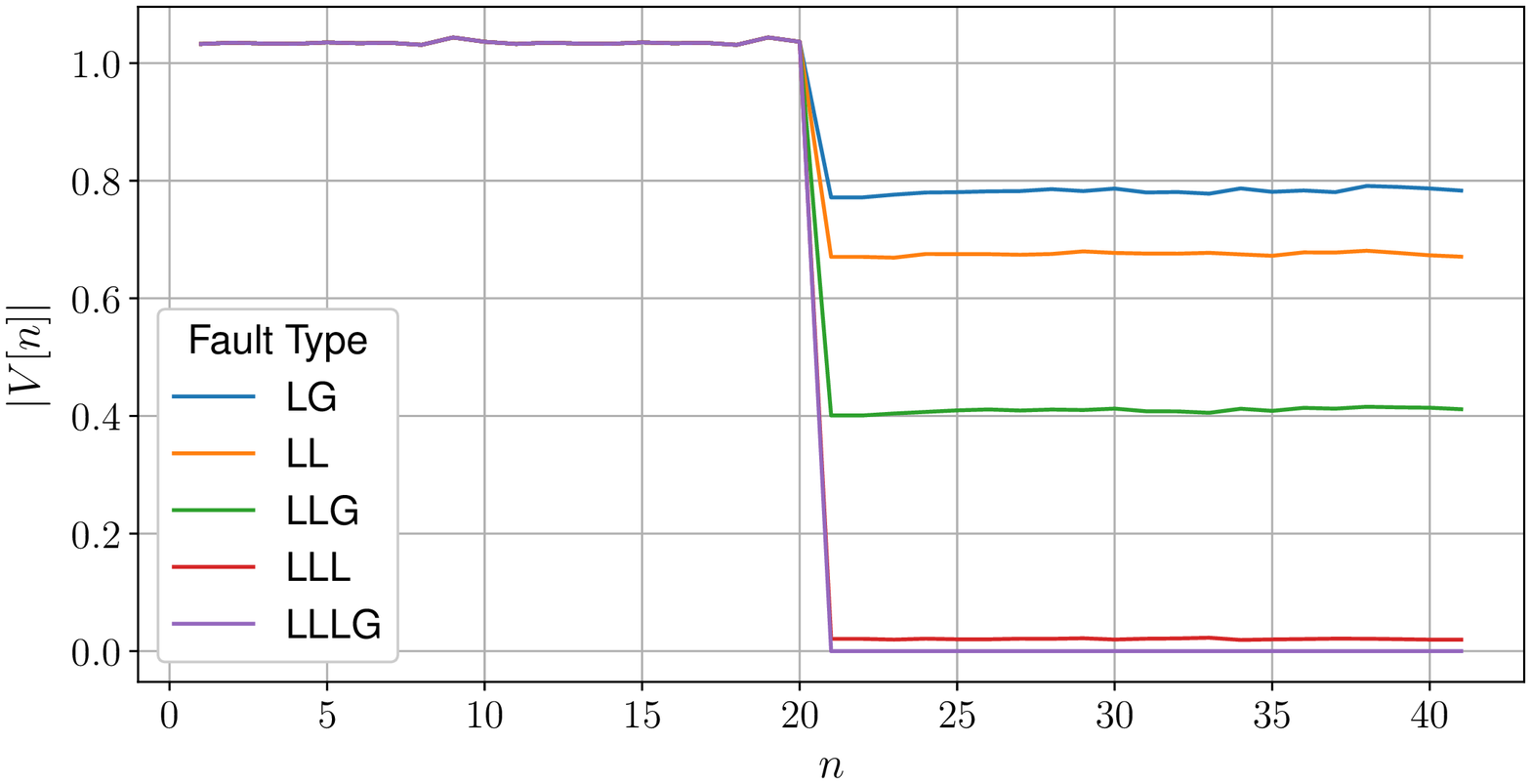}
    }
    \hfill
    \subfloat[Bus 160\label{fig:bus160}]{
    \includegraphics[width=0.98\linewidth]{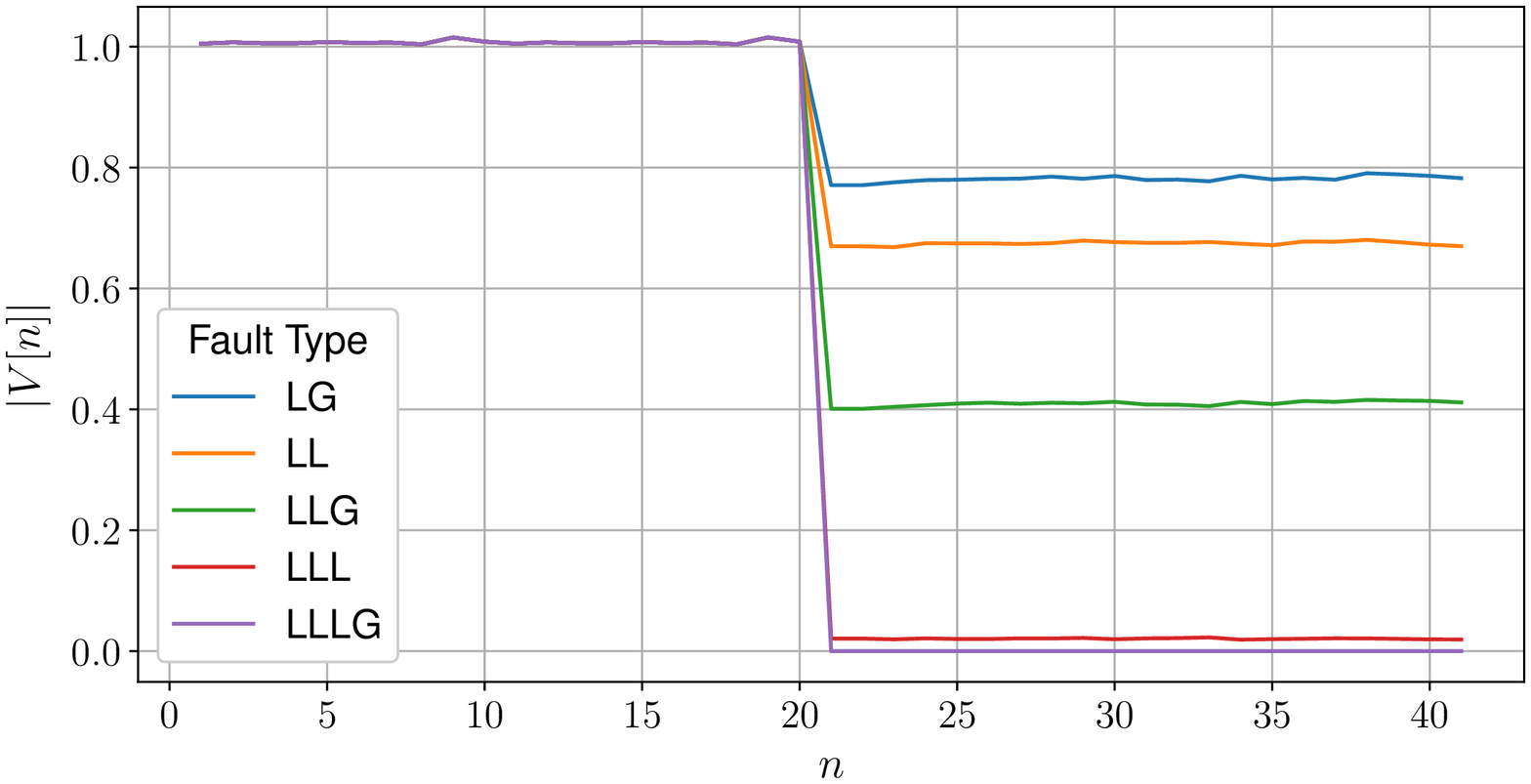}
    }
    \caption{\rev{Nodal nominal voltage measurements in per unit for different fault types in the radial IEEE 123-bus system.}}
    \label{fig:short_circuit_fault}
\end{figure}

\section{Conclusion}\label{sec:con}
 In this paper, we propose a new approach to automatically detect and identify outages in urban distribution grids with high renewable penetration. Specifically, we develop a stochastic modeling of nodal voltage data stream and propose a change point detection approach based on the probability distribution changes due to outage events. As a highlight, unlike existing approaches, our method is applicable to the existing distribution grids because we require neither the grid topology nor the outage pattern as a prior. Also, we only need smart data measurements to achieve the optimal detection performance. In addition to outage detection, we provide theoretical proof that optimal out-of-service branch identification can be achieved due to the conditional independence of voltages based on the power flow analysis. We verify the proposed algorithm on eight mesh and radial distribution grid systems with and without DERs. From extensive simulations, our algorithm can perfectly detect and identify outages in a short time, with and without the integration of DERs.

There are multiple future works that can further enhance the proposed algorithm. For example, as discussed in Section~\ref{sec:radial_sim}, the computationally complexity of line outage detection scales up with the growth of grid size. A distributed or decentralized approach may reduce the computational efforts. Additionally, we discuss the impact of measurement noise, systemic errors, and anomaly data in this paper. However, other types of data quality issues may also degrade the performance, such as missing data and fixed-point measurements. How to handle these practical scenarios requires further investigation. In this paper, we focus on detecting and localizing the line outage event. Identifying fault types that cause line outages is also an interesting research direction. At last, dynamic topology estimation and switch status identification share some similarities with the proposed out-of-branch localization method, but also have more rigorous requirements. How to apply the proposed out-of-branch localization method to estimate topology is a direction of future studies.

\section{Acknowledgement}
We would like to thank Jingyi Yuan from Arizona State University for discussion on the short-circuit fault analysis.

\appendix
\subsection{Proof of Lemma~\ref{lemma:mle}}
\label{sec:parm_est}
\begin{proof}
To apply the maximum likelihood method, we need to compute the partial derivative of the posterior probability $P(\mathcal{H}_1|\Delta\mathbf{v}^{1:N})$. Unfortunately, $P(\mathcal{H}_1|\Delta\mathbf{v}^{1:N})$ is not a convex function and we may have multiple estimates. To address this challenge, we will provide an approximation of the posterior probability $P(\mathcal{H}_1|\Delta\mathbf{v}^{1:N})$. Specifically, the log-probability $\log P(\mathcal{H}_1|\Delta\mathbf{v}^{1:N})$ is
\begin{align}
&	\log P(\mathcal{H}_1|\Delta\mathbf{v}^{1:N}) \nonumber \\
=& \log C + \log\left\{\sum_{k=1}^N\pi(k)\prod_{n=1}^{k-1}g(\mv[n])\prod_{n=k}^{N}f(\mv[n]; \boldsymbol{\Theta})\right\}, \label{eq:log_post}
\end{align}
where $\boldsymbol{\Theta} = \{\mu_1, \Sigma_1\}$ represents the unknown parameters of $f$. In (\ref{eq:log_post}), the term within the braces is an expectation of $\prod_{n=1}^{k-1}g(\mv[n])\prod_{n=k}^{N}f(\mv[n]; \boldsymbol{\Theta})$ over the prior distribution $\pi$, $\E_\pi(\prod_{n=1}^{k-1}g(\mv[n])\prod_{n=k}^{N}f(\mv[n]; \boldsymbol{\Theta}))$. Also, the logarithmic function is convex. Therefore, we can apply the Jensen's inequality \cite{cover2012elements} to approximate $\log P(\mathcal{H}_1|\Delta\mathbf{v}^{1:N})$:
\begin{align}
	& \log P(\mathcal{H}_1|\Delta\mathbf{v}^{1:N}) \nonumber \\
	\geq & \log C + \sum_{k=1}^N\pi(k)\left(\sum_{n=1}^{k-1}\log g(\mv[n]) + \sum_{n=k}^{N} \log f(\mv[n]; \boldsymbol{\Theta})\right) \nonumber \\
	= & \widetilde{P}(\mathcal{H}_1|\Delta\mathbf{v}^{1:N}) \label{eq:approx_log_post}.
\end{align}
Since $g$ and $f$ are Gaussian distributions, (\ref{eq:approx_log_post}) can be written as
\begin{align*}
    & \widetilde{P}(\mathcal{H}_1|\mv^{1:N}) 
    = \log C + \sum_{k=1}^N\frac{-\pi(k)}{2} \cdot \\
    &\left(
    \sum_{n=1}^{k-1}\log|2\pi\Sigma_0| + (\mv[n] - \mu_0)^T\Sigma_0^{-1}(\mv[n] - \mu_0) \right. \\
    &+ \left. \sum_{n=k}^{N}\log|2\pi\Sigma_1| + (\mv[n] - \mu_1)^T\Sigma_1^{-1}(\mv[n] - \mu_1)
    \right).
\end{align*}
Since $\widetilde{P}(\mathcal{H}_1|\Delta\mathbf{v}^{1:N})$ is convex, we can estimate $\mu_1$ by setting $\partial \widetilde{P}/\partial \mu_1=0$. Specifically, we have
\[
\frac{\partial \widetilde{P}(\mathcal{H}_1|\mv^{1:N})}{\partial \mu_1} = \sum_{k=1}^N \frac{-\pi(k)}{2}\sum_{n=k}^N(\mv[n]-\mu_1)\Sigma_1^{-1} = 0.
\]
Since
\[
    \sum_{n=k}^N(\mv[n]-\mu_1) = \left(\sum_{n=k}^N\mv[n] - (N-k+1)\mu_1\right),
\]
the estimate of $\mu_1$ is
\[
\widehat{\mu}_1 = \frac{\sum_{k=1}^N\pi(k)\sum_{n=k}^N\mv[n]}{\sum_{k=1}^N\pi(k)(N-k+1)}.
\]

For the covariance matrix $\Sigma_1$, the partial derivative is 
\[
   \frac{\partial \widetilde{P}(\mathcal{H}_1|\mv^{1:N})}{\partial \Sigma_1} = \sum_{k=1}^N\frac{-\pi(k)}{2}\left(\sum_{n=k}^N S[k] - (N-k+1)\Sigma_1\right)
\]
where $S[k] = \sum_{n=k}^N(\mv[n] - \mu_1)(\mv[n] - \mu_1)^T$. Letting $\mu_1 = \widehat{\mu}_1$ and $\partial \widetilde{P}(\mathcal{H}_1|\mv^{1:N})/\partial \Sigma_1 = 0$, the covariance matrix estimate is 
\[
\widehat{\Sigma}_1 = \frac{\sum_{k=1}^N\pi(k)S[k]}{\sum_{k=1}^N\pi(k)(N-k+1)}.
\]
\end{proof}

\bibliographystyle{IEEEtran}
\bibliography{ref}
\end{document}